\documentclass[letterpaper,12pt,oneside,final]{article}

\oddsidemargin=0.5cm
\evensidemargin=0.5cm
\textwidth=15.1cm
\textheight=22.3cm
\topmargin=-1cm

\usepackage{amsmath, amssymb,mathrsfs,amsthm, complexity, mathtools}
\usepackage{tikz}
\usepackage{tkz-graph}
\usetikzlibrary{shapes,backgrounds}
\usepackage[ruled,vlined]{algorithm2e}
\newtheorem{theorem}{Theorem}[section]
\newtheorem{lemma}[theorem]{Lemma}

\newtheorem{definition}{Definition}
\newtheorem{problem}{Problem}[]
\newtheorem{observation}{Observation}

\newtheoremstyle{case}{}{}{}{}{}{:}{ }{}
\theoremstyle{case}
\newtheorem{case}{Case}
\newtheorem{case1}{Case}
\newtheorem{case2}{Case}
\newtheorem{case3}{Case}
\newtheorem{case4}{Case}
\newtheorem{case5}{Case}
\newtheorem{case6}{Step}

\parskip    2mm	

\title{Building a larger class of graphs for efficient reconfiguration of vertex colouring}
\author{Therese Biedl
\thanks{Work of TB and AL supported by NSERC.} 
\and Anna Lubiw \addtocounter{footnote}{-1}\footnotemark
\and Owen Merkel
}
\date{}
\begin{document}
\maketitle

\begin{center}
{\footnotesize
David R. Cheriton School of Computer Science, University of Waterloo, \\Waterloo, Ontario, Canada 
\\{\tt \{biedl, alubiw, odmerkel\}@uwaterloo.ca}}
\end{center}

\begin{abstract}

A $k$-colouring of a graph $G$ is an assignment of at most $k$ colours to the vertices of $G$ so that adjacent vertices are assigned different colours. The reconfiguration graph of the $k$-colourings, $\mathcal{R}_k(G)$, is the graph whose vertices are the $k$-colourings of $G$ and two colourings are joined by an edge in $\mathcal{R}_k(G)$ if they differ in colour on exactly one vertex. For a $k$-colourable graph $G$, we investigate the connectivity and diameter of $\mathcal{R}_{k+1}(G)$. It is known that not all weakly chordal graphs have the property that $\mathcal{R}_{k+1}(G)$ is connected. On the other hand, $\mathcal{R}_{k+1}(G)$ is connected and of diameter $O(n^2)$ for several subclasses of weakly chordal graphs such as chordal, chordal bipartite, and $P_4$-free graphs.

We introduce a new class of graphs called OAT graphs that extends the latter classes and in fact extends outside the class of weakly chordal graphs. OAT graphs are built from four simple operations, disjoint union, join, and the addition of a clique or comparable vertex. We prove that if $G$ is a $k$-colourable OAT graph then $\mathcal{R}_{k+1}(G)$ is connected with diameter $O(n^2)$. Furthermore, we give polynomial time algorithms to recognize OAT graphs and to find a path between any two colourings in $\mathcal{R}_{k+1}(G)$.
\end{abstract}

\section{Introduction}
A reconfiguration framework consists of states and transitions between states. The states represent feasible solutions to a source problem and there is a transition between states if the corresponding feasible solutions satisfy a predefined adjacency relationship. Reconfiguration problems have been studied in many areas (see e.g. \cite{nishimura2018}) and have a wide-range of applications. Here we focus on reconfiguration of vertex colourings.

The reconfiguration graph of the $k$-colourings, $\mathcal{R}_k(G)$, is the graph whose vertices are the $k$-colourings of $G$ and two colourings are joined by an edge in $\mathcal{R}_k(G)$ if they differ in colour on exactly one vertex. Several problems have been studied in the area of reconfiguration for vertex colouring, two of which are concerned with the connectivity and diameter of $\mathcal{R}_k(G)$. The first problem is to determine if for any two colourings, it is always possible to reconfigure one colouring into the other by recolouring a single vertex at a time and while remaining a proper colouring at each step. The second problem is to determine an upper bound on the number of recolourings needed to reconfigure one colouring to another.

Bonamy et al.~\cite{bonamy2014} asked the following question. Given a $k$-colourable perfect graph $G$, is $\mathcal{R}_{k+1}(G)$ connected with diameter $O(n^2)$? One cannot hope for a smaller diameter since for $P_n$, the path on $n$ vertices, $\mathcal{R}_3(P_n)$ has diameter $\Omega(n^2)$ \cite{bonamy2014}. Bonamy and Bousquet \cite{bonamy2018} answered this question in the negative by showing $\mathcal{R}_{k+1}(G)$ has an isolated vertex when $G$ is a complete bipartite graph minus a matching and where $k$ can be arbitrarily large. Feghali and Fiala \cite{feghali2020} also investigated this question and found an infinite family of weakly chordal graphs $G$ where $\mathcal{R}_{k+1}(G)$ has an isolated vertex. In the same paper, Feghali and Fiala introduced a subclass of weakly chordal graphs called $\emph{compact}$ graphs (definitions and details in Section \ref{sec:background}). They prove that for a $k$-colourable compact graph $G$, $\mathcal{R}_{k+1}(G)$ is connected with diameter $O(n^2)$.

In this paper, we consider a class of graphs which we call \emph{OAT} graphs that can be constructed from four simple operations, defined as follows. Let $G_1=(V_1, E_1)$ and $G_2=(V_2, E_2)$ be vertex-disjoint OAT graphs.

\begin{definition}
A graph $G$ is an OAT graph if it can be constructed from single vertex graphs with a finite sequence of the following four operations.
\begin{enumerate}
    \item Taking the \emph{disjoint union} of $G_1$ and $G_2$, defined as $(V_1 \cup V_2, E_1 \cup E_2)$.
    
    \item Taking the \emph{join} of $G_1$ and $G_2$, defined as $(V_1 \cup V_2, E_1 \cup E_2 \cup \{xy \mid x \in V_1, y \in V_2\})$.
    
    \item Adding a vertex $u \notin V_1$ \emph{comparable} to vertex $v \in V_1$, defined as $(V_1 \cup \{u\}, E_1 \cup \{ux \mid x \in X\})$, where $X \subseteq N(v)$.
    
    \item Attaching a complete graph $Q = (V_Q, E_Q)$ to a vertex $v$ of $G_1$, defined as $(V_1 \cup V_Q, E_1 \cup E_Q \cup \{qv \mid q \in V_Q\})$.
\end{enumerate}
\end{definition}

\begin{figure}
\begin{center}
\begin{tikzpicture}[scale=0.7]
\tikzstyle{vertex}=[circle, draw, fill=black, inner sep=0pt, minimum size=5pt]

	\draw (0, 0) ellipse (1cm and 1.5cm);
	\draw (2.8, 0) ellipse (1cm and 1.5cm);
	\draw[dashed] (1,0) -- (1.8, 0);
	\node at (0,0) {$G_1$};
	\node at (2.8,0) {$G_2$};
    \node at (1.4,-2) {Disjoint union};

	\draw (6, 0) ellipse (1cm and 1.5cm);
	\draw (8.8, 0) ellipse (1cm and 1.5cm);
	\draw (6.75,1) -- (8.05, -1);
	\draw (7,0) -- (7.8, 0);
	\draw (6.75,1) -- (8.05, 1);
	\draw (6.75,-1) -- (8.05, -1);
	\draw (6.75,-1) -- (8.05, 1);
	\node at (6,0) {$G_1$};
	\node at (8.8,0) {$G_2$};
	\node at (7.4,-2) {Join};
	
	\draw (13, -0.5) ellipse (1.8cm and 1cm);
	\draw (13, -0.5) ellipse (1.2cm and 0.6cm);
	\node at (13,-0.5) {\small $N(u)$};
	\node at (11.2,0.6) {\small $N(v)$};
	\node at (13,-2) {Adding a comparable};
	\node at (13,-2.7) {vertex};
	\node[vertex, label=$v$](1) at (12.1,1.3) {};
	\node[vertex, label=$u$](2) at (13.9,1.3) {};
	\draw[dashed] (12.1,1.3) -- (13.9, 1.3);
	\draw (12.1,1.3) -- (12.6, 0.48);
	\draw (13.9,1.3) -- (13.3, 0.09);
	
	\draw (18, 0) ellipse (1.5cm and 1.5cm);
	\draw (19.8, 1.7) ellipse (1cm and 1cm);
	\node[vertex, label=below:$v$](3) at (19.05,1.05) {};
    \node at (18,0) {\small $G$};
    \node at (19.8,1.7) {$Q$};
	\node at (19.4,-2) {Attaching a clique};

\end{tikzpicture}
\end{center}
\caption{The four operations that construct OAT graphs. A dashed line indicates no edge.}
\label{fig:operations}
\end{figure}
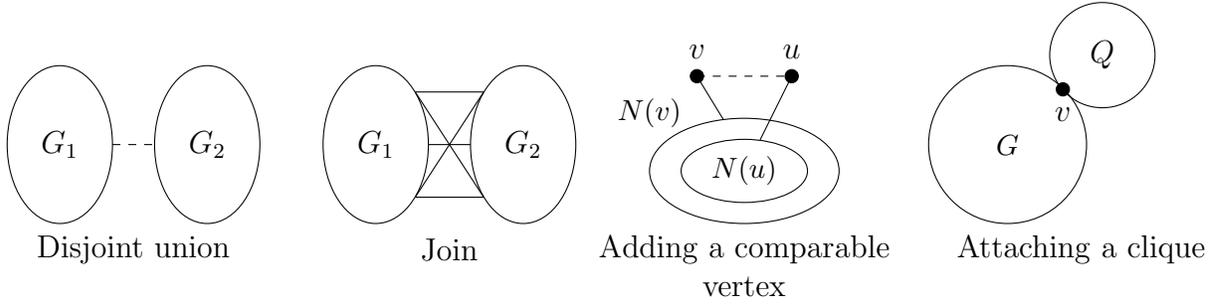

These operations are summarized in Figure \ref{fig:operations}. We will show that the class of OAT graphs contains the class of chordal bipartite graphs, compact graphs, and $P_4$-sparse graphs, a generalization of $P_4$-free graphs. For a $k$-colourable OAT graph $G$ on $n$ vertices, we prove that $\mathcal{R}_{k+1}(G)$ is connected with diameter $O(n^2)$. This result was not known for the class of $P_4$-sparse graphs. Our proof is algorithmic and gives a polynomial time algorithm to find a path between any two colourings in $\mathcal{R}_{k+1}(G)$. 

The class of OAT graphs also includes graphs which are not perfect, but all of which have equal chromatic number and clique number. Up to the knowledge of the authors, this is the first time that the $O(n^2)$ diameter of $\mathcal{R}_{k+1}(G)$ has been proven for a class that includes graphs which are not perfect (see Figure \ref{fig:imperfect}).  One can easily verify the following observation.

\begin{observation}
If $G$ is an OAT graph, then $\chi(G) = \omega(G)$.
\end{observation}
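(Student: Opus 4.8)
The plan is to induct on the number of operations used to build $G$. For the base case, a single-vertex graph has $\chi = \omega = 1$. For the inductive step, assume $\chi(G_i) = \omega(G_i)$ for the OAT graph(s) $G_i$ fed into the last operation, and verify that the resulting graph $G$ again satisfies $\chi(G) = \omega(G)$. Since $\chi(H) \ge \omega(H)$ for every graph $H$, in each case it suffices to compute $\omega(G)$ from $\omega(G_1)$ (and $\omega(G_2)$) and exhibit a proper colouring of $G$ using that many colours.

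For the disjoint union, $\omega(G_1 \cup G_2) = \max\{\omega(G_1),\omega(G_2)\}$, and colouring each part while reusing the same palette gives a $\max\{\chi(G_1),\chi(G_2)\}$-colouring, so equality is inherited. For the join, every clique of $G_1 + G_2$ is the union of a clique of $G_1$ and a clique of $G_2$, hence $\omega(G_1+G_2) = \omega(G_1)+\omega(G_2)$; combining optimal colourings of $G_1$ and $G_2$ on disjoint palettes gives a $(\chi(G_1)+\chi(G_2))$-colouring, and again equality is inherited.

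For adding a vertex $u$ comparable to $v$, with $N(u)=X\subseteq N(v)$, note that $u$ and $v$ are non-adjacent. Any clique $K$ containing $u$ satisfies $K\setminus\{u\}\subseteq N(u)\subseteq N(v)$, so $(K\setminus\{u\})\cup\{v\}$ is also a clique; hence $\omega(G)=\omega(G_1)$. Starting from an optimal colouring of $G_1$ and giving $u$ the colour of $v$ is proper, since the colour of $v$ is absent from $N(v)\supseteq N(u)$, so $\chi(G)=\chi(G_1)=\omega(G_1)=\omega(G)$.

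For attaching a clique $Q$ to $v$, the only neighbour outside $V_Q$ of any vertex of $Q$ is $v$, so every clique of $G$ lies entirely in $G_1$ or entirely in $V_Q\cup\{v\}$; thus $\omega(G)=\max\{\omega(G_1),\,|V_Q|+1\}$. Take an optimal colouring of $G_1$ and extend it: with $\max\{\chi(G_1),|V_Q|+1\}$ colours available there are at least $|V_Q|$ colours other than that of $v$, enough to colour the $|V_Q|$ pairwise-adjacent vertices of $Q$ with distinct colours all different from $v$'s, giving a proper $\max\{\chi(G_1),|V_Q|+1\}$-colouring. Hence $\chi(G)=\max\{\omega(G_1),|V_Q|+1\}=\omega(G)$. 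None of the four cases presents a genuine obstacle; the only place calling for a moment's care is this clique-attachment step, where one should check the colour count separately in the regimes $\chi(G_1)\ge|V_Q|+1$ and $\chi(G_1)\le|V_Q|$.
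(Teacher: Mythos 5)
Your proof is correct and follows essentially the same route as the paper, which simply records how $\chi$ and $\omega$ change under each of the four operations and leaves the (inductive) verification to the reader; your case analysis fills in exactly those verifications. The only difference is the added detail, e.g.\ the explicit clique-exchange argument for the comparable-vertex case and the colour count for the clique-attachment case, both of which are fine.
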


Specifically, the chromatic number and clique number of an OAT graph changes with each operation as follows.
\begin{enumerate}
    \item If $G$ is the disjoint union of the graphs $G_1$ and $G_2$, then $\chi(G) = \max\{\chi(G_1), \chi(G_2)\}$ and $\omega(G) = \max\{\omega(G_1), \omega(G_2)\}$.
    
    \item If $G$ is the join of the graphs $G_1$ and $G_2$, then $\chi(G) = \chi(G_1)+\chi(G_2)$ and $\omega(G) = \omega(G_1) + \omega(G_2)$.
    
    \item If $G$ is obtained by adding a comparable vertex to a graph $H$ then $\chi(G) = \chi(H)$ and $\omega(G) = \omega(H)$.
    
    \item If $G$ is obtained by attaching a complete graph $Q$ to a graph $H$, then $\chi(G) = \max\{\chi(H), |Q|+1\}$ and $\omega(G) = \max\{\omega(H), |Q|+1\}$.
\end{enumerate}

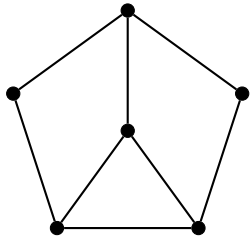
\begin{figure}
\centering
\begin{tikzpicture}[scale=0.4]
\tikzstyle{vertex}=[circle, draw, fill=black, inner sep=0pt, minimum size=5pt]
        \node[vertex](1) at (0,0) {};
        \node[vertex](7) at (0,4) {};
        \node[vertex](8) at ({4*cos(18)},{4*sin(18}) {};
	 \node[vertex](9) at ({4*sin(36)},-{4*cos(36}) {};
	 \node[vertex](10) at (-{4*sin(36)},-{4*cos(36}) {};
	 \node[vertex](11) at (-{4*cos(18)},{4*sin(18}) {};
      
    \Edge(1)(10)
 	\Edge(1)(9)
 	\Edge(1)(7)
	\Edge(7)(8)
	\Edge(8)(9)
	\Edge(9)(10)
	\Edge(10)(11)
	\Edge(7)(11)
\end{tikzpicture}
\caption{An OAT graph that is not perfect}
\label{fig:imperfect}
\end{figure}

In Section \ref{sec:preliminaries}, we provide the definitions and notations used in this paper. In Section \ref{sec:background} we discuss the relationship between OAT graphs and several other graph classes. In Section \ref{sec:recolourbackground} we discuss known results on reconfiguration of vertex colouring for these graph classes. In Section \ref{sec:recolour}, we prove our main result that if $G$ is a $k$-colourable OAT graph then $\mathcal{R}_{k+1}(G)$ is connected with diameter $O(n^2)$. In Section \ref{sec:recognize}, we show that OAT graphs can be recognized in polynomial time and we give an $O(n^3)$ time algorithm to recognize OAT graphs.

\subsection{Preliminaries}
\label{sec:preliminaries}

All graphs considered are finite and simple. Refer to the book by Diestel \cite{diestel} for definitions not given here. Let $G=(V,E)$ be a graph. The vertex-set of $G$ is denoted by $V(G)$ and the edge-set of $G$ is denoted by $E(G)$. We use $n$ to denote the number of vertices of $G$, and we use $m$ to denote the number of edges of $G$. The \emph{complement} of a graph $G$ is the graph with vertex-set $V(G)$ where two vertices are adjacent in the complement if and only if they are not adjacent in $G$. The \emph{neighbourhood} of a vertex $v \in V(G)$, denoted by $N_G(v)$ (or simply $N(v)$ if the context is clear), is the set of vertices adjacent to $v$ in $G$. The \emph{closed neighbourhood} of $v$, denoted $N[v]$, is $N(v) \cup \{v\}$. The \emph{degree} of $v$, denoted by $d(v)$, is equal to $|N(v)|$. A \emph{pendent} vertex is a vertex with degree one. A vertex $v \in V(G)$ is called a \emph{cut-vertex} if $G \setminus v$ has more components than $G$. For $X,Y\subseteq V$, we say that $X$ is \emph{joined} to $Y$ if every vertex in $X$ is adjacent to every vertex in $Y$. 

Let $P_n$ and $C_n$ denote the path and cycle on $n$ vertices, respectively. A \emph{hole} is a cycle on at least five vertices and an \emph{anti-hole} is the complement of a hole. A hole is \emph{even} or \emph{odd} if it has an even or odd number of vertices, respectively. For a graph $H$, we say that $G$ is \emph{$H$-free} if $G$ does not contain an induced subgraph isomorphic to $H$. For a set of graphs $\mathcal{H}$, we say that $G$ is $\mathcal{H}$-free if $G$ is $H$-free for every $H \in \mathcal{H}$. For convenience and when the context is clear, we consider a set of vertices $U \subseteq V(G)$ as a subgraph, namely, the subgraph of $G$ induced by $U$. We use the notation $G \setminus v$ to denote the graph obtained from $G$ by deleting $v$ and all edges incident to $v$. We use the notation $G \setminus U$ to denote the graph obtained from $G$ by deleting all vertices of $U$ and all edges incident to the vertices of $U$.

A {\em clique} (resp.~{\em stable set}) in a graph is a set of pairwise adjacent (resp.~non-adjacent) vertices. For a positive integer $k$, a \emph{$k$-colouring} of a graph $G$ is a function $c:V(G)\to \mathcal{S}$ for some finite set $\mathcal{S}$ with  $|\mathcal{S}| \le k$ and $c(u)\neq c(v)$ whenever $u$ and $v$ are adjacent in $G$. A graph is {\em $k$-colourable} if it admits a $k$-colouring. The \emph{chromatic number} of a graph $G$, denoted by $\chi(G)$, is the minimum number $k$ for which $G$ is $k$-colourable. The \emph{clique number} of $G$, denoted by $\omega(G)$, is the size of a largest clique in $G$. A graph $G$ is \emph{perfect} if for all induced subgraphs $H$ of $G$, $\chi(H) = \omega(H)$. 

For non-adjacent vertices $u,v \in V(G)$, we say that $u$ is \emph{comparable} to $v$ if $N(u) \subseteq N(v)$. For vertices $x,y \in V(G)$, if $N(x) = N(y)$ then $x$ and $y$ are called \emph{true twins} if $x$ and $y$ are adjacent and \emph{false twins} if they are not adjacent.

\section{Graph classes}
\label{sec:background}
In this section we define several classes of graphs that have been considered for reconfiguration of vertex colouring and examine the relationship between these classes (see Figure \ref{fig:classes} for a summary). A graph is \emph{chordal} if it does not contain a cycle with four or more vertices as an induced subgraph. A graph is \emph{co-chordal} if it is the complement of a chordal graph. A graph is \emph{weakly chordal} if it does not contain a hole or an anti-hole as an induced subgraph. The strong perfect graph theorem \cite{SPGT} states that a graph is perfect if and only if it does not contain a subgraph isomorphic to an odd hole or an odd anti-hole. It is clear from this characterization that every weakly chordal graph is perfect.

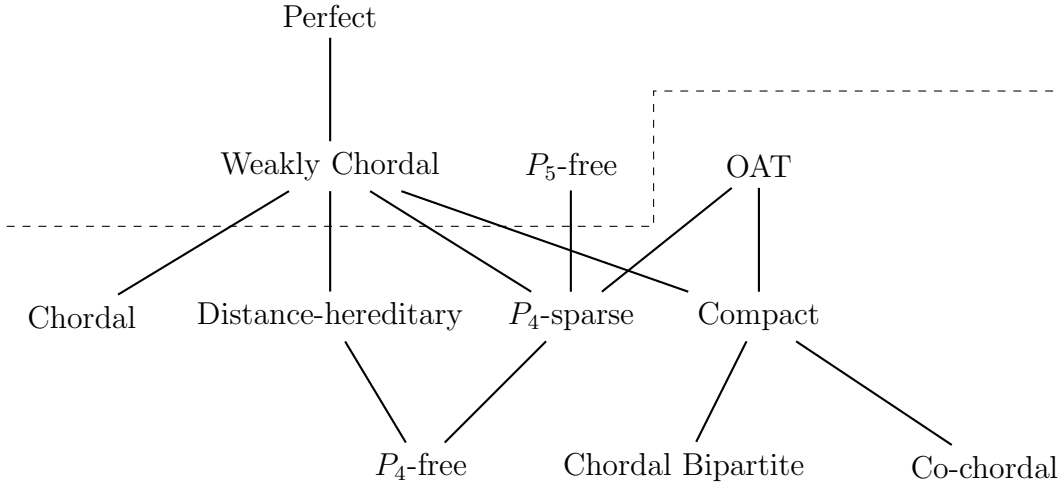
\begin{figure}[h!]
    \centering
    \begin{tikzpicture}[scale=1]
    
	\node[](1) at (-2.7,0) {Perfect};
	\node[](2) at (-2.7,-2) {Weakly Chordal};
	\node[](3) at (3,-2) {OAT};
	\node[](0) at (0.5,-2) {$P_5$-free};
	\node[](4) at (-6,-4) {Chordal};
	\node[](5) at (-2.7,-4) {Distance-hereditary};
	\node[](6) at (0.5,-4) {$P_4$-sparse};
	\node[](7) at (-1.5,-6) {$P_4$-free};
	\node[](8) at (2,-6) {Chordal Bipartite};
	\node[](9) at (3,-4) {Compact};
	\node[](10) at (6,-6) {Co-chordal};
	
	\draw[dashed] (-7,-2.8) -- (1.6,-2.8) -- (1.6,-1) -- (7,-1);
	
	\Edge(1)(2)
	\Edge(2)(4)
	\Edge(2)(5)
	\Edge(2)(6)
	\Edge(2)(9)
	\Edge(3)(6)
	\Edge(3)(9)
	\Edge(5)(7)
	\Edge(6)(7)
	\Edge(9)(10)
	\Edge(8)(9)
	\Edge(0)(6)

\end{tikzpicture}

    \caption{The relationship between graph classes. For the graph classes below the dashed line, $\mathcal{R}_{k+1}(G)$ is connected while for the graph classes above the solid line, $\mathcal{R}_{k+1}(G)$ is disconnected. The reconfiguration results for OAT graphs and $P_4$-sparse graphs are new results of this paper.}
    \label{fig:classes}
\end{figure}

A graph is \emph{bipartite} if its vertex-set can be partitioned into two stable sets. A graph is \emph{chordal bipartite} if it is both weakly chordal and bipartite. A graph $G$ is \emph{distance-hereditary} if for all connected induced subgraphs $H$ of $G$, and any two vertices $x,y$ of $H$, the distance between $x$ and $y$ in $H$ is the same as the distance between $x$ and $y$ in $G$.

The class of OAT graphs generalizes the $P_4$-free graphs, also called \emph{cographs}. The $P_4$-free graphs are exactly the graphs that can be constructed from single vertex graphs with the join and disjoint union operation \cite{corneil1981}. We also note that the class of OAT graphs extends the class of chordal bipartite graphs. Bonamy et al.~\cite{bonamy2014} proved that every chordal bipartite graph can be constructed from a set of one or more isolated vertices by adding a pendent vertex or adding a comparable vertex. It is easy to see that the operations defining OAT graphs are sufficient for building such graphs.

The class of distance-hereditary graphs is exactly the class of graphs that can be constructed from single vertex graphs by adding a pendent vertex, a true twin, or a false twin \cite{hammer1990}. It is also known that the class of distance-hereditary graphs is a generalization of the class of $P_4$-free graphs \cite{hammer1990}. Although both OAT graphs and distance-hereditary graphs extend the class of $P_4$-free graphs, both classes contain graphs that are not members of the other class. It is clear from the definition that every distance-hereditary graph does not contain a domino, house, or gem graph as an induced subgraph (see Figure \ref{fig:dhforbidden}), but each of these graphs is an OAT graph. We also note that the class of OAT graphs does not extend the class of distance-hereditary graphs, since for example, the graph in Figure \ref{fig:dhnotoat} is a distance-hereditary graph but is not an OAT graph.

\begin{figure}[h!]
\centering
\begin{tikzpicture}[scale=0.7]
\tikzstyle{vertex}=[circle, draw, fill=black, inner sep=0pt, minimum size=5pt]

    \node[vertex](1) at (0,2) {};
    \node[vertex](2) at (2,2) {};
    \node[vertex](3) at (2,4) {};
	\node[vertex](4) at (0,4) {};
	\node[vertex](5) at (0,0) {};
	\node[vertex](6) at (2,0) {};
     
    \Edge(1)(2)
    \Edge(1)(4)
    \Edge(1)(5)
    \Edge(3)(2)
    \Edge(6)(2)
    \Edge(3)(4)
    \Edge(5)(6)

    \hspace{5mm}
    
    \node[vertex](7) at (4,0) {};
    \node[vertex](8) at (6,0) {};
    \node[vertex](9) at (6,2) {};
	\node[vertex](10) at (4,2) {};
	\node[vertex](11) at (5,3.732) {};
     
    \Edge(7)(8)
    \Edge(7)(10)
    \Edge(9)(8)
    \Edge(9)(10)
    \Edge(9)(11)
    \Edge(10)(11)
    
    \hspace{5mm}
    
    \node[vertex](12) at (8,0) {};
    \node[vertex](13) at (10,0) {};
    \node[vertex](14) at (12,0) {};
	\node[vertex](15) at (14,0) {};
	\node[vertex](16) at (11,2.732) {};
     
    \Edge(12)(13)
    \Edge(13)(14)
    \Edge(14)(15)
    \Edge(15)(16)
    \Edge(14)(16)
    \Edge(13)(16)
    \Edge(12)(16)
\end{tikzpicture}
\caption{The domino, house, and gem graphs.}
\label{fig:dhforbidden}
\end{figure}
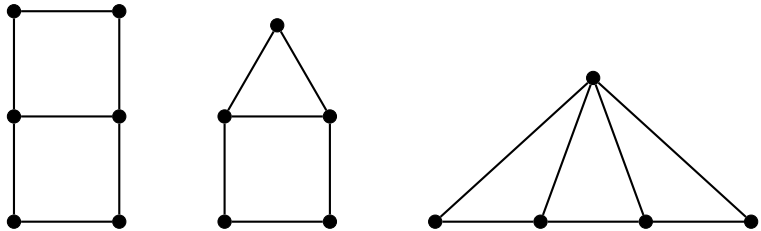

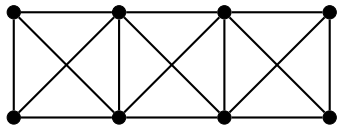
\begin{figure}[h!]
\centering
\begin{tikzpicture}[scale=0.7]
\tikzstyle{vertex}=[circle, draw, fill=black, inner sep=0pt, minimum size=5pt]

    \node[vertex](1) at (0,0) {};
    \node[vertex](2) at (0,2) {};
    \node[vertex](3) at (2,0) {};
	\node[vertex](4) at (2,2) {};
	\node[vertex](5) at (4,0) {};
	\node[vertex](6) at (4,2) {};
	\node[vertex](7) at (6,0) {};
	\node[vertex](8) at (6,2) {};
     
    \Edge(1)(2)
    \Edge(1)(3)
    \Edge(1)(4)
    \Edge(2)(3)
    \Edge(2)(4)
    \Edge(3)(4)
    \Edge(3)(5)
    \Edge(3)(6)
    \Edge(4)(5)
    \Edge(4)(6)
    \Edge(5)(6)
    \Edge(5)(7)
    \Edge(5)(8)
    \Edge(6)(7)
    \Edge(6)(8)
    \Edge(7)(8)

\end{tikzpicture}
\caption{A graph that is chordal, distance-hereditary, and not OAT.}
\label{fig:dhnotoat}
\end{figure}

Next we recall two other classes of graphs that generalize the class of $P_4$-free graphs. Jamison and Olario \cite{jamison1989} introduced the class of \emph{$P_4$-reducible} graphs and Ho\`{a}ng \cite{hoang1989} further generalized this class to the \emph{$P_4$-sparse} graphs. 

\begin{definition}[\cite{jamison1989}]
A graph is \emph{$P_4$-reducible} if each vertex of the graph is in at most one induced $P_4$.
\end{definition}

\begin{definition}[\cite{hoang1989}]
A graph is \emph{$P_4$-sparse} if for every set of 5 vertices, there is at most one induced $P_4$.
\end{definition} 

It can be verified from the definition that $P_4$-sparse graphs are ($P_5$, house, $C_5$)-free. Jamison and Olario \cite{jamison1992_1} give the complete list of seven forbidden induced subgraphs for this class. Feghali and Fiala left as an open problem if $\mathcal{R}_{k+1}(G)$ is connected with diameter $O(n^2)$ for a $k$-colourable ($P_5$, house, $C_5$)-free graph $G$. We answer this question in the positive for the subclass of $P_4$-sparse graphs. Jamison and Olario \cite{jamison1992_2} also prove that $P_4$-sparse graph are exactly the class of graphs that can be constructed from single vertex graphs with the join operation, the disjoint union operation and a third operation defined as follows (note that we only use the following operation in the proof of Lemma \ref{lem:sparse}).

Let $G_1=(V_1, \emptyset)$ and $G_2=(V_2, E_2)$ be vertex-disjoint $P_4$-sparse graphs with $V_2 = \{v\} \cup K \cup R$ such that:
\begin{itemize}
    \item $|K| = |V_1| + 1 \ge 2$.
    \item $K$ is a clique.
    \item $R$ is joined to $K$ and every vertex in $R$ is non-adjacent to $v$.
    \item There exists a vertex $v' \in K$ such that $N_{G_2}(v) = \{v'\}$ or $N_{G_2}(v) = K \setminus \{v'\}$.
\end{itemize}

Choose a bijection $f:V_1 \to K \setminus \{v'\}$. Define the third operation on $G_1$ and $G_2$ to be the graph $(V_1 \cup V_2, E_2 \cup E')$ where 

\[E' =
  \begin{cases}
    \{xf(x) \mid x \in V_1\} & \text{if $N_{G_2}(v) = \{v'\}$} \\
    \{xz \mid x \in V_1, z \in K \setminus \{f(x)\} \} & \text{if $N_{G_2}(v) = K \setminus \{v'\}$}
  \end{cases}
\]

See Figure \ref{fig:sparse} for an illustration of this operation.

\begin{lemma}
\label{lem:sparse}
Every $P_4$-sparse graph is an OAT graph.
\end{lemma}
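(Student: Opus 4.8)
The plan is to exploit the constructive characterisation of $P_4$-sparse graphs recalled just before the statement: such graphs are exactly those obtained from single-vertex graphs by finitely many joins, disjoint unions, and applications of the ``third operation''. Since join and disjoint union are two of the four OAT operations and a single vertex is an OAT graph, I will argue by structural induction on a construction sequence, and the only real work is to show that the third operation, applied to OAT graphs, yields an OAT graph.

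So suppose $G$ is built by the third operation from vertex-disjoint $P_4$-sparse graphs $G_1 = (V_1,\emptyset)$ and $G_2 = (V_2,E_2)$ with $V_2 = \{v\}\cup K\cup R$ and bijection $f:V_1\to K\setminus\{v'\}$, and assume inductively that $G_2$ is an OAT graph (the edgeless graph $G_1$ is trivially OAT, but I will not need this). The heart of the argument will be the claim that $G$ is obtained from $G_2$ by adding the vertices of $V_1$, one at a time, each as a \emph{comparable vertex} --- the third of the four OAT operations --- after which the induction closes immediately.

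I will verify this claim in the two cases of the operation. When $N_{G_2}(v)=\{v'\}$, every $x\in V_1$ is a pendant on $f(x)$, and since $|K|\ge 2$ the vertex $f(x)$ has a neighbour $w$ in the clique $K$; adding $x$ with neighbourhood $\{f(x)\}\subseteq N(w)$ makes $x$ comparable to $w$. When $N_{G_2}(v)=K\setminus\{v'\}$, one computes $N_G(x)=K\setminus\{f(x)\}$ and checks that this is contained in $N_G(f(x))$ while $x\not\sim f(x)$, so $x$ is comparable to $f(x)$, added with neighbourhood $K\setminus\{f(x)\}$. In both cases the set of neighbours assigned to $x$ lies inside $K$, hence inside the neighbourhood that the relevant vertex of $K$ already has in $G_2$ (because $K$ is a clique), and vertex additions only enlarge neighbourhoods, so the comparable-vertex operation is legitimate at every intermediate stage, regardless of the order in which the vertices of $V_1$ are inserted.

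I do not anticipate a genuine obstacle here: the lemma is essentially a dictionary translation of the Jamison--Olariu decomposition into OAT operations, and the only point needing attention is the bookkeeping that the assigned neighbourhood of each newly added vertex is always a subset of the current neighbourhood of some existing vertex --- which, as noted, reduces to the fact that the relevant sets live inside the clique $K$.
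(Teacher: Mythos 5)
Your proposal is correct and follows essentially the same route as the paper: reduce to the Jamison--Olariu construction and replicate the third operation by inserting the vertices of $V_1$ one at a time as comparable vertices, with $x$ comparable to a neighbour of $f(x)$ in $K$ (the paper uses $v'$) in the pendant case and to $f(x)$ itself in the other case. Your extra remarks about the insertion order and the growth of neighbourhoods are just a more explicit version of what the paper leaves implicit.
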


\begin{proof}
We show that the third operation defining the class of $P_4$-sparse graphs can be replicated by repeatedly adding comparable vertices. Let $G$ be the graph constructed by this operation on $G_1$ and $G_2$. There are two cases to consider depending on whether $N_{G_2}(v) = \{v'\}$ or $N_{G_2}(v) = K \setminus \{v'\}$.

Suppose $N_{G_2}(v) = \{v'\}$. Then each vertex of $V_1$ is a pendent vertex in $G$ and is comparable to $v'$. Therefore, $G$ can be obtained from $G_2$ by adding each vertex of $G_1$ as a comparable vertex. Now suppose $N_{G_2}(v) = K \setminus \{v'\}$. Then each vertex $x \in V_1$ is comparable to $f(x)$ in $G$ since $K$ is a clique. Once again, $G$ can be obtained from $G_2$ by adding each vertex of $G_1$ as a comparable vertex.
\end{proof}

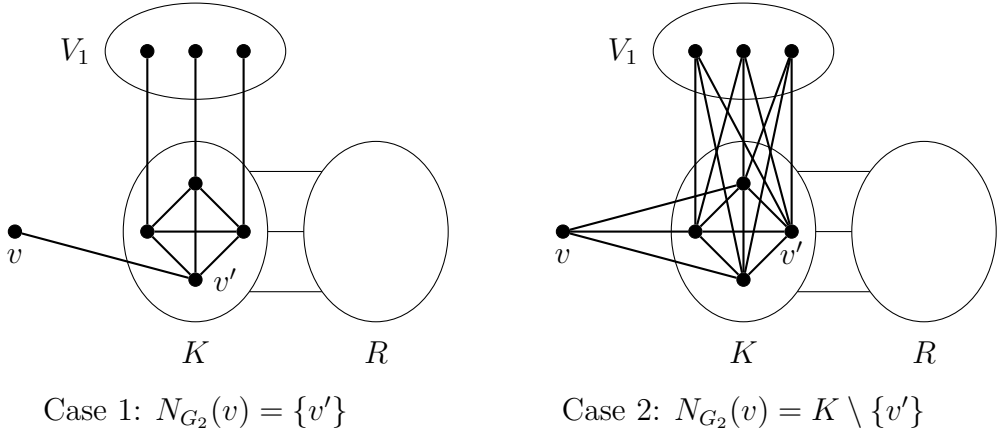
\begin{figure}
    \centering
    \begin{tikzpicture}[scale=0.8]
    \tikzstyle{vertex}=[circle, draw, fill=black, inner sep=0pt, minimum size=5pt]
        \node[vertex, label=below:$v$](1) at (0,0) {};
        \node[vertex, label=right:$v'$](0) at (3,-0.8) {};
        \node[vertex](2) at (2.2,3) {};
        \node[vertex](3) at (3,3) {};
        \node[vertex](4) at (3.8,3) {};
        \node[vertex](5) at (2.2,0) {};
        \node[vertex](6) at (3,0.8) {};
        \node[vertex](7) at (3.8,0) {};
        \draw (3, 3) ellipse (1.5cm and 0.8cm);
    	\draw (3, 0) ellipse (1.2cm and 1.5cm);
    	\draw (6, 0) ellipse (1.2cm and 1.5cm);
    	\node at (3,-2) {$K$};
    	\node at (6,-2) {$R$};
    	\node at (1,3) {$V_1$};
    	\node at (3,-3) {Case 1: $N_{G_2}(v) = \{v'\}$};
    	\draw (4.2,0) -- (4.8, 0);
	    \draw (3.9,1) -- (5.1, 1);
	    \draw (3.9,-1) -- (5.1, -1);
    	\Edge(0)(1)
    	\Edge(2)(5)
    	\Edge(3)(6)
    	\Edge(4)(7)
    	\Edge(0)(5)
    	\Edge(0)(6)
    	\Edge(0)(7)
    	\Edge(5)(6)
    	\Edge(5)(7)
    	\Edge(6)(7)
    \end{tikzpicture}
    \hspace{10mm}
    \begin{tikzpicture}[scale=0.8]
    \tikzstyle{vertex}=[circle, draw, fill=black, inner sep=0pt, minimum size=5pt]
        \node[vertex, label=below:$v$](1) at (0,0) {};
        \node[vertex](7) at (3,-0.8) {};
        \node[vertex](2) at (2.2,3) {};
        \node[vertex](3) at (3,3) {};
        \node[vertex](4) at (3.8,3) {};
        \node[vertex](5) at (2.2,0) {};
        \node[vertex](6) at (3,0.8) {};
        \node[vertex](0) at (3.8,0) {};
        \node at (3.8,-0.35) {$v'$};
        \draw (3, 3) ellipse (1.5cm and 0.8cm);
    	\draw (3, 0) ellipse (1.2cm and 1.5cm);
    	\draw (6, 0) ellipse (1.2cm and 1.5cm);
    	\node at (3,-2) {$K$};
    	\node at (6,-2) {$R$};
    	\node at (1,3) {$V_1$};
    	\node at (3,-3) {Case 2: $N_{G_2}(v) = K \setminus \{v'\}$};
    	\draw (4.2,0) -- (4.8, 0);
	    \draw (3.9,1) -- (5.1, 1);
	    \draw (3.9,-1) -- (5.1, -1);
    	
    	\Edge(1)(5)
    	\Edge(1)(6)
    	\Edge(1)(7)
    	
    	\Edge(3)(0)
    	\Edge(3)(5)
    	\Edge(3)(6)
    	
    	\Edge(2)(0)
    	\Edge(2)(5)
    	\Edge(2)(7)
    	
    	\Edge(4)(0)
    	\Edge(4)(6)
    	\Edge(4)(7)
    	
    	\Edge(0)(5)
    	\Edge(0)(6)
    	\Edge(0)(7)
    	\Edge(5)(6)
    	\Edge(5)(7)
    	\Edge(6)(7)
    \end{tikzpicture}
    \caption{The third operation defining $P_4$-sparse graphs.}
    \label{fig:sparse}
\end{figure}

Recently, Feghali and Fiala \cite{feghali2020} examined a subclass of weakly chordal graphs called \emph{compact graphs} defined below. A \emph{2-pair} $\{u,v\}$ is a pair of non-adjacent vertices such that every chordless path between $u$ and $v$ has exactly two edges. For a 2-pair $\{u,v\}$, let $S(u,v) = N(u) \cap N(v)$ and let $C_v$ denote the component of $G \setminus S(u,v)$ that contains the vertex $v$.

\begin{definition}
A weakly chordal graph $G$ is \emph{compact} if every subgraph $H$ of $G$ either
\begin{itemize}
    \item is a complete graph, or
    \item contains a 2-pair $\{x,y\}$ such that $N_H(x) \subseteq N_H(y)$, or
    \item contains a 2-pair $\{x, y\}$ such that $C_x \cup S(x,y)$ in $H$ induces a clique on at most three vertices.
\end{itemize}
\end{definition}

See also Figure \ref{fig:compact}. Feghali and Fiala proved that every co-chordal graph and every 3-colourable ($P_5$, $C_5$, house)-free graph is compact \cite{feghali2020}. We note that the class of OAT graphs is a strict generalization of compact graphs.

\begin{lemma}
Every compact graph is an OAT graph but not every OAT graph is a compact graph.
\end{lemma}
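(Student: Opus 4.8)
The plan is to prove the two halves of the statement separately: first, that every compact graph can be built by the four OAT operations, and second, that some OAT graph fails to be compact (indeed, fails to be even weakly chordal, which suffices since compact graphs are weakly chordal by definition).

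For the inclusion, I would argue by induction on the number of vertices of a compact graph $G$. If $G$ has one vertex there is nothing to prove, so assume $|V(G)| \ge 2$. Apply the definition of compactness to $H = G$ itself. In the first case $G$ is a complete graph, hence obtainable by attaching cliques to a single vertex (the first operation) or even more simply by repeatedly attaching a single vertex as a comparable/clique vertex; so $G$ is an OAT graph. In the second case $G$ contains a 2-pair $\{x,y\}$ with $N_G(x) \subseteq N_G(y)$, so $x$ is comparable to $y$; then $G \setminus x$ is again compact (compactness is hereditary by definition, since it quantifies over all subgraphs $H$), hence an OAT graph by induction, and $G$ is obtained from $G\setminus x$ by adding the comparable vertex $x$ with $X = N_G(x) \subseteq N_{G\setminus x}(y)$ — the third operation. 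The delicate case is the third bullet: $G$ contains a 2-pair $\{x,y\}$ such that $C_x \cup S(x,y)$ induces a clique on at most three vertices, where $S(x,y) = N(x)\cap N(y)$ and $C_x$ is the component of $G\setminus S(x,y)$ containing $x$. Here I would observe that since $\{x,y\}$ is a 2-pair, every chordless $x$–$y$ path has exactly two edges, so removing $S(x,y)$ disconnects $x$ from $y$; thus $C_x$ is entirely contained in $N[x]$ up to the separator, and the vertices of $C_x$ have no neighbours outside $C_x \cup S(x,y)$. Writing $Q = C_x \cup S(x,y)$, which is a clique of size at most $3$, and letting $v$ be the (unique, if it exists) vertex of $S(x,y)$ adjacent to the rest of $G$ — more carefully, I need to show $C_x \setminus \{\text{one vertex } v\}$ attaches to the rest of the graph only through $v$ — this is exactly the pattern of the fourth operation (attaching a clique to a vertex $v$) or of repeatedly adding comparable vertices inside a small clique. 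So $G$ is obtained from $G \setminus (C_x \setminus\{v\})$, which is compact and hence OAT by induction, by attaching a clique of size $\le 2$ to $v$. The main obstacle will be pinning down precisely which vertex plays the role of the attachment point $v$ and verifying that no vertex of $C_x$ that we peel off has a neighbour outside $Q$; this requires a careful use of the 2-pair property and the definition of $C_x$ as a connected component of $G \setminus S(x,y)$.

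For the strictness, it suffices to exhibit an OAT graph that is not weakly chordal, since every compact graph is weakly chordal. The graph in Figure~\ref{fig:imperfect} is a candidate: I would check that it contains an induced $C_5$ (a hole), so it is not weakly chordal, while it is built by the OAT operations — for instance, start from a $C_5$, which itself can be obtained by attaching a clique / adding comparable vertices appropriately, or more directly take an induced odd hole and note the construction sequence described when that figure was introduced. Alternatively, the house graph (Figure~\ref{fig:dhforbidden}), which is an induced subgraph containing a $C_5$... but the house is actually weakly chordal, so I would instead simply use a $C_5$ with a pendant-type attachment: an odd hole $C_7$ is an OAT graph (build it as a path $P_6$ via comparable-vertex additions and then close it up — or rather, since OAT construction of cycles needs checking, I would lean on Figure~\ref{fig:imperfect}'s graph whose OAT-membership the authors assert). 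Concretely I would write: the graph $G^*$ of Figure~\ref{fig:imperfect} is an OAT graph (it is drawn as such), and it contains an induced five-cycle, hence is not weakly chordal, hence not compact. This completes the proof that the inclusion is strict.
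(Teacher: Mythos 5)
Your overall strategy mirrors the paper's: induction on $|V(G)|$ through the three compactness alternatives for the inclusion, and a non-weakly-chordal OAT graph for strictness. However, two points need repair. In the third compactness case you stop at an acknowledged ``obstacle,'' and your description of the attachment vertex (a vertex $v\in S(x,y)$ ``adjacent to the rest of $G$,'' and the graph $G\setminus(C_x\setminus\{v\})$) is muddled. The resolution is a short case split on $|S(x,y)|$, which the paper carries out explicitly: since $C_x\cup S(x,y)$ induces a clique on at most three vertices, if $|S(x,y)|=2$ then $C_x=\{x\}$ and $N(x)\subseteq S(x,y)\subseteq N(y)$, so $x$ is comparable to $y$ and you are back in your second case; if $S(x,y)=\{z\}$ then either $C_x=\{x\}$ (again comparable to $y$) or $C_x=\{x,w\}$ with $w\in N(x)\setminus N(y)$, and since $C_x$ is a component of $G\setminus S(x,y)$ its vertices have no neighbours outside $C_x\cup\{z\}$, so $G$ is obtained from the (compact, hence by induction OAT) graph $G\setminus\{x,w\}$ by attaching the clique $\{x,w\}$ to $z$. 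You already state the key fact that $C_x$ has no neighbours outside $C_x\cup S(x,y)$, so this is a finishing step rather than a wrong turn, but as written the case is not proved.

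For strictness, your fallback argument (the graph of Figure~\ref{fig:imperfect} contains an induced $C_5$, hence is not weakly chordal, hence not compact) is the paper's second example, but you must actually exhibit an OAT build of that graph rather than appeal to the figure caption, whose claim is precisely what this lemma's proof is expected to substantiate; the construction is easy (start from the triangle formed by the hub and two consecutive rim vertices, then add the remaining three rim vertices one at a time as comparable vertices). More seriously, your side remarks that $C_5$ and $C_7$ are themselves OAT graphs are false: every OAT graph satisfies $\chi=\omega$, while odd holes have $\chi=3>2=\omega$, so ``start from a $C_5$'' or ``close up a path'' cannot be made to work (closing up is not an operation in any case). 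The paper also gives a strictness witness that avoids perfection entirely: attach a clique on more than three vertices to a vertex of another clique on more than three vertices; this is OAT by construction but satisfies none of the three compactness conditions.
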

\begin{proof}
 Let $G$ be a compact graph. The proof is by induction on the number of vertices of $G$. If $G$ is a complete graph, then clearly $G$ is OAT. If not, then suppose $G$ contains a 2-pair $\{x,y\}$ such $N_G(x) \subseteq N_G(y)$. Then $x$ and $y$ are comparable vertices. By the induction hypothesis, $G \setminus x$ is an OAT graph. Then we can construct $G$ from $G \setminus x$ by adding back the comparable vertex $x$ and the appropriate edges. Lastly, suppose $G$ contains a 2-pair $\{x, y\}$ such that $C_x \cup S(x,y)$ in $H$ induces a clique on at most three vertices. If $S(x,y)$ contains two vertices then $x$ is comparable to $y$ and the proof follows from the argument above. Now assume $S(x,y)$ contains exactly one vertex $z$ and let $w$ be the vertex in $N_G(x)\setminus N_G(y)$ (see Figure \ref{fig:compact}). Then by the induction hypothesis, $G \setminus \{x, w\}$ is an OAT graph. We can construct $G$ by attaching the complete graph on two vertices $\{x, w\}$ to the vertex $z$. Therefore every compact graph is an OAT graph.
 
Next we show that there are infinitely many OAT graphs that are not compact graphs. Consider the infinite family of graphs constructed by attaching a complete graph with more than three vertices to some vertex of another complete graph with more than three vertices. By the definition of OAT graphs, it is easy to see every graph in this class is an OAT graph. It is also easy to see that any graph in this class will not satisfy any of the three requirements in the definition of compact graphs.

We consider another example showing that the class of OAT graphs contains graphs which are not compact. Note that since weakly chordal graphs are perfect, by definition compact graphs are perfect. The graph in Figure \ref{fig:imperfect} is not perfect but is an OAT graph since it can be constructed from adding three comparable vertices to a clique of size three. Therefore, not all OAT graphs are compact graphs.
\end{proof}

\begin{figure}
    \centering
    \begin{tikzpicture}[scale=0.8]
    \tikzstyle{vertex}=[circle, draw, fill=black, inner sep=0pt, minimum size=5pt]
        \node[vertex, label=above:$x$](0) at (-3,2) {};
        \node[vertex, label=above:$y$](1) at (3,2) {};
        \node[vertex, label=right:$z$](2) at (0,0) {};
        \node[vertex, label=below:$w$](3) at (-3,0) {};
        \draw (-3, 1) ellipse (0.5cm and 2cm);
    	\draw (0, 0) ellipse (1.5cm and 0.5cm);
    	\draw (3, 1) ellipse (0.5cm and 2cm);
    	\node at (-4,1) {$C_x$};
    	\node at (0,-1) {$S(x,y)$};
    	\node at (4,1) {$C_y$};
    	\node at (3,1) {\textbf{$\vdots$}};
    	\Edge(0)(2)
    	\Edge(0)(3)
    	\Edge(2)(3)
    	\Edge(2)(1)
    \end{tikzpicture}
    \caption{A compact graph that is not a complete graph and that does not have a comparable vertex.}
    \label{fig:compact}
\end{figure}
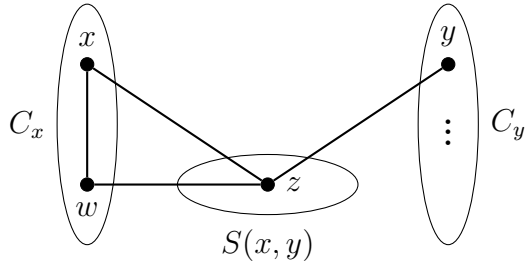

Figure \ref{fig:classes} summarizes the relationship between the graph classes discussed in this section.

\subsection{Known results on recolouring}
\label{sec:recolourbackground}

In this section we summarize reconfiguration results on graph classes that were defined in Section \ref{sec:background}. Bonamy et al.~\cite{bonamy2014} proved the following result for chordal graphs and chordal bipartite graphs.

\begin{theorem}[\cite{bonamy2014}]
Let $G$ be a $k$-colourable chordal or chordal bipartite graph on $n$ vertices. Then $\mathcal{R}_{k+1}(G)$ is connected with diameter at most $2n^2$.
\end{theorem}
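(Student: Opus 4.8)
The plan is to prove the sharper statement that $G$ admits a \emph{canonical} $(k+1)$-colouring $w$ with the property that every $(k+1)$-colouring of $G$ can be transformed into $w$ using at most $n^2$ recolourings; then for any two $(k+1)$-colourings $c_1,c_2$ the triangle inequality in $\mathcal{R}_{k+1}(G)$ gives $d(c_1,c_2)\le d(c_1,w)+d(w,c_2)\le 2n^2$. I would treat the two graph classes by separate inductions, using in each case a structural decomposition that peels off an ``easy'' vertex.

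For chordal graphs, order the vertices $v_1,\dots,v_n$ so that for every $i$ the set $B_i$ of neighbours of $v_i$ among $v_1,\dots,v_{i-1}$ is a clique (the reverse of a perfect elimination ordering). Since $B_i\cup\{v_i\}$ is a clique and $G$ is $k$-colourable, $|B_i|\le\omega(G)-1\le k-1$; in particular a simplicial vertex $v$ of $G$ has $|N(v)|\le k-1$, so in any $(k+1)$-colouring at least two colours are free at $v$ and $v$ is never ``frozen''. I would induct on $n$: delete a simplicial vertex $v$, so $G'=G\setminus v$ is chordal and $k$-colourable with a canonical colouring $w'$ reachable from everything in $\le (n-1)^2$ steps; extend $w'$ to $w$ by giving $v$ the least colour avoided by $w'$ on $N(v)$. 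Given an arbitrary $(k+1)$-colouring $c$ of $G$, I would run the inductive sequence that takes $c|_{G'}$ to $w'$, replaying each single-vertex move inside $G$. A replayed move recolouring a neighbour $u$ of $v$ to a colour $\delta$ is the only kind that can fail in $G$, and then only because $\delta$ equals $v$'s current colour; in that event first re-park $v$ on a colour avoiding the at most $k-1$ colours currently on $N(v)$ together with $\delta$ (at least one such colour exists). When $G'$ has reached $w'$, recolour $v$ to $w(v)$, which is now legal because $N(v)$ carries the colours $w'(N(v))$.

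The main obstacle is the diameter bookkeeping: replaying the $G'$-sequence as described may re-park $v$ once for every move of a neighbour of $v$, which inflates the length by a constant factor and only yields $O(n^2)$ with a bad constant (or worse if one is sloppy), not the required $2n^2$. The crux is to argue that $v$ is re-parked $O(n)$ times \emph{in total} --- for instance by strengthening the inductive hypothesis so that the canonical sequence on $G'$ recolours each vertex only a controlled number of times, and by updating $v$'s parking colour only when the neighbourhood genuinely forces it. Pinning down the $n^2$ (equivalently, that on average a vertex is recoloured only $O(n)$ times, which is best possible by the $P_n$ example cited in the introduction) is where the real care is needed.

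For chordal bipartite graphs I would run the analogous induction along the construction recalled above (start from an edgeless graph and repeatedly add a pendant vertex or a vertex comparable to an existing one; here $k+1\ge 3$ whenever $G$ has an edge). A pendant vertex has degree $1$ and is handled exactly like the simplicial vertex above. For a vertex $u$ added comparable to $v$, I would maintain the invariant that $u$ and $v$ always receive the same colour: since $N(u)\subseteq N(v)$, any colour legal for $v$ is legal for $u$, so each move of $v$ in the reconfiguration of $G\setminus u$ is simulated by the two consecutive moves ``recolour $v$, then recolour $u$'' in $G$, and a move that conflicts with $u$ would already conflict with $v$, hence never occurs; the rest of the sequence is replayed verbatim. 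As in the chordal case, the only nontrivial point is to keep the total number of these doubled moves within the $O(n^2)$ budget.
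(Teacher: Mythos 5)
First, note that the paper does not prove this theorem: it is quoted from Bonamy et al.~\cite{bonamy2014}. The closest in-paper machinery is Lemma \ref{lem:main1} (transform any colouring to a canonical colouring, recolouring each vertex at most $2n$ times), which covers chordal bipartite graphs because they are OAT graphs, but gives the weaker constant $4n^2$ and does not cover chordal graphs at all, since not every chordal graph is OAT (Figure \ref{fig:dhnotoat}). Your chordal bipartite half is essentially the same induction as the cited proof and as Lemma \ref{lem:main1}: peeling a pendant or comparable vertex works, because the peeled vertex has a single ``conflict source'' (its unique neighbour, respectively the vertex $v$ it is slaved to), so the per-vertex strengthening ``each vertex is recoloured at most $n$ times on the way to the canonical colouring'' closes the induction and yields $n^2+n^2=2n^2$. (Minor omission: you must first recolour $u$ to $v$'s colour to establish your invariant, exactly as in Case 3 of Lemma \ref{lem:main1}; this costs one extra move and is harmless.)

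The chordal half, however, has a genuine gap, and it is exactly the one you flag yourself: the bookkeeping is not an afterthought but the heart of the theorem, and your induction as set up does not close. With the hypothesis ``every colouring reaches the canonical one within $(n-1)^2$ moves,'' the replay argument only gives the recursion $T(n)\le 2T(n-1)+1$ (every move on $N(v)$ may force a re-park of the simplicial vertex $v$), which is exponential, not quadratic. The natural repair --- the per-vertex bound that works in the bipartite case --- also fails here: a simplicial vertex $v$ may have up to $k-1$ neighbours, each of which is legitimately recoloured $\Theta(n)$ times in the inductive sequence, so $v$ can be forced to re-park $\Theta(kn)$ times, blowing any $O(n)$-per-vertex budget when $k$ is large (and summing to $O(kn^2)$, not $2n^2$, overall). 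So ``strengthening the inductive hypothesis'' in the way you suggest is not enough; one needs a different accounting of when $v$ actually must move (this is precisely the nontrivial content of the chordal argument in \cite{bonamy2014}). As it stands, your proposal proves the chordal bipartite case modulo routine details, but does not establish connectivity with diameter $2n^2$ (nor any polynomial bound, as written) for chordal graphs.
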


In the same paper, the authors prove a lower bound on the diameter of $\mathcal{R}_{k+1}(G)$ for the path $P_n$.

\begin{theorem}[\cite{bonamy2014}]
\label{thm:lowerbound}
Let $P_n$ be the path graph on $n$ vertices. Then $\mathcal{R}_3(P_n)$ has diameter $\Omega(n^2)$.
\end{theorem}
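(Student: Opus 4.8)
The plan is to encode each $3$-colouring of $P_n$ as an integer ``height function'', to observe that a single recolouring step changes this function in a very rigid way, and then to exhibit two colourings whose height functions are so far apart that $\Omega(n^2)$ steps are forced.

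\emph{Heights.} Identify the colours with $\mathbb{Z}/3\mathbb{Z}$ and let $v_1,\dots,v_n$ be the vertices of $P_n$ in path order. Call $h\colon\{1,\dots,n\}\to\mathbb{Z}$ a \emph{lift} of a $3$-colouring $c$ if $h_i\equiv c(v_i)\pmod 3$ and $|h_{i+1}-h_i|=1$ for all $i$. First I would check the two basic facts that every $3$-colouring has a lift and that any two lifts of the same colouring differ by the same multiple of $3$ at every vertex; both are immediate because $c(v_{i+1})-c(v_i)\equiv\pm1\pmod3$, which pins down the step $h_{i+1}-h_i$ once $h_i$ is chosen. So $\mathcal{R}_3(P_n)$ is really a reconfiguration problem on lattice paths with unit steps, considered modulo a global shift by a multiple of $3$.

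\emph{One step.} The key structural claim is: if colourings $c$ and $c'$ differ only at $v_i$, then from any lift $h$ of $c$ one can build a lift $h'$ of $c'$ with $h'_j=h_j$ for $j\ne i$ and $h'_i=h_i\pm2$; moreover, for an interior $v_i$ this is possible precisely when $h_{i-1}=h_{i+1}$ (equivalently $c(v_{i-1})=c(v_{i+1})$, the usual condition for $v_i$ to be recolourable), and for the two endpoints it is always possible. Thus, in the ``lattice path modulo shift'' picture, \emph{every} recolouring step --- including one at an endpoint --- is simply ``flip one coordinate, which is a local extremum, by $\pm2$''. I would prove this by a short case analysis (endpoint versus interior vertex). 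This is the step I expect to be the crux: recolouring an endpoint naively looks ``free'' and could move a crude linear potential by $\Theta(n)$, and it is only by passing to the universal cover $\mathbb{Z}\to C_3$ and quotienting by shifts of $3$ that it becomes a single local flip on a par with all the others.

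\emph{The bound.} Now suppose there is a walk of length $T$ in $\mathcal{R}_3(P_n)$ from $c$ to $c'$. Fix a lift $h^{(0)}$ of $c$; running the structural claim along the walk produces lifts $h^{(0)},h^{(1)},\dots,h^{(T)}$, consecutive ones differing in $\ell_1$-norm by exactly $2$, so $\|h^{(T)}-h^{(0)}\|_1\le2T$. Since $h^{(T)}$ is a lift of $c'$, we have $h^{(T)}_i=h'_i+3k$ for all $i$ and some fixed $k\in\mathbb{Z}$, where $h'$ is any preselected lift of $c'$; hence
\[
T\ \ge\ \tfrac12\,\min_{k\in\mathbb{Z}}\ \sum_{i=1}^{n}\bigl|\,h^{(0)}_i-h'_i-3k\,\bigr|.
\]
Finally I would take $c$ with lift $h^{(0)}_i=i-1$ (the ``up staircase'' $0,1,2,0,1,2,\dots$) and $c'$ with lift $h'_i=-(i-1)$ (the ``down staircase'' $0,2,1,0,2,1,\dots$); both are proper since consecutive heights differ by $1$. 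Then $h^{(0)}_i-h'_i=2(i-1)$, so the right-hand side equals $\tfrac12\min_{k}\sum_{j=0}^{n-1}|2j-3k|$, and a routine estimate --- the convex function $x\mapsto\sum_{j=0}^{n-1}|2j-x|$ attains its minimum near the median $x\approx n-1$, where its value is $\tfrac12 n^2(1-o(1))$ --- shows this is $\Omega(n^2)$ for every integer $k$. Therefore these two colourings are at distance $\Omega(n^2)$ in $\mathcal{R}_3(P_n)$, so its diameter is $\Omega(n^2)$. Everything except the structural claim is elementary.
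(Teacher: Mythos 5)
Your proposal is correct, and there is nothing in the paper to compare it against: Theorem \ref{thm:lowerbound} is stated as a known result quoted from Bonamy et al.~\cite{bonamy2014}, with no proof given here. Your argument is a valid self-contained derivation by the standard height-function (lift) technique for $3$-colourings of paths. The two preliminary facts hold as you say: a lift exists and is unique up to adding a global multiple of $3$, because exactly one of $\pm 1$ is congruent to $c(v_{i+1})-c(v_i) \pmod 3$; and the one-step structural claim checks out in both cases — for an interior $v_i$, properness of the move forces $c(v_{i-1})=c(v_{i+1})$, hence $h_{i-1}=h_{i+1}$, so $h_i$ is a local extremum and the coupled lift of the new colouring is obtained by replacing $h_i$ with $h_i\mp 2$, while an endpoint admits exactly one recolouring (to the unique third colour) and its height likewise moves by exactly $2$. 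Consequently $\|h^{(T)}-h^{(0)}\|_1\le 2T$ along any walk of length $T$, the only ambiguity in the terminal lift is a global shift by $3k$, and with the up- and down-staircase colourings you get $T\ge\tfrac12\min_{k\in\mathbb{Z}}\sum_{j=0}^{n-1}|2j-3k|\ge\tfrac{n^2}{4}(1-o(1))$, which gives the $\Omega(n^2)$ diameter bound (the constants are immaterial, and if $\mathcal{R}_3(P_n)$ were disconnected the bound would be trivial; in fact it is connected, e.g.\ by Theorem 2.1 of this paper since $P_n$ is chordal bipartite). You are also right that the endpoint case is the crux: a naive potential such as a fixed-basepoint height or a disagreement count can be moved by a lot via repeated endpoint recolourings, and quotienting the lift by global shifts of $3$ is exactly what makes every move, including endpoint moves, a single $\pm2$ local flip.
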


Bonamy and Bousquet proved the following result for $P_4$-free graphs \cite{bonamy2018} and for distance-hereditary graphs \cite{bonamy20142}.

\begin{theorem}[\cite{bonamy2018}]
Let $G$ be a $k$-colourable $P_4$-free graph on $n$ vertices. Then $\mathcal{R}_{k+1}(G)$ is connected with diameter $O(\chi(G) \cdot n)$.
\end{theorem}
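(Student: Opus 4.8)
The plan is to induct on the construction of $G$ as a cograph, i.e.\ on its cotree: since the $P_4$-free graphs are exactly those built from single vertices by disjoint union and join, it suffices to handle these two operations, with the single vertex as base case. A key point is that the bound we carry through the induction, $O(\chi(G)\cdot n)$, must be \emph{independent of $k$}, so that when we recurse into a subgraph with a larger effective number of colours the bound does not degrade. For a single vertex and $k\ge 1$, $\mathcal{R}_{k+1}(G)$ is the complete graph $K_{k+1}$, which is connected of diameter $1=O(\chi(G)\cdot n)$. For the disjoint union $G=G_1\sqcup G_2$ there are no edges between $V_1$ and $V_2$, so a $(k+1)$-colouring of $G$ is an independent pair of $(k+1)$-colourings of $G_1$ and $G_2$; hence $\mathcal{R}_{k+1}(G)$ is the Cartesian product of the graphs $\mathcal{R}_{k+1}(G_1)$ and $\mathcal{R}_{k+1}(G_2)$, which is connected whenever both factors are (true by induction, since $k\ge\chi(G)\ge\chi(G_i)$) and whose diameter is the sum of the two diameters, namely $O(\chi(G_1)n_1)+O(\chi(G_2)n_2)=O(\chi(G)(n_1+n_2))=O(\chi(G)n)$.

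The whole difficulty is the join $G=G_1*G_2$, where $\chi(G)=\chi(G_1)+\chi(G_2)$ and every vertex of $V_1$ is adjacent to every vertex of $V_2$. The crucial consequence is that in any $(k+1)$-colouring $c$ of $G$ the palette $c(V_1)$ is disjoint from $c(V_2)$; combined with $|c(V_i)|\ge\chi(G_i)$ and $k+1\ge\chi(G)+1=\chi(G_1)+\chi(G_2)+1$, this forces at least one colour to be absent from $V_1\cup V_2$ (with more slack when $k>\chi(G)$). Given two $(k+1)$-colourings $c,c'$, I would work in two phases. In the \emph{alignment phase}, using global moves that rotate whole colour classes (each such rotation costs $O(n)$ recolourings since every vertex is touched $O(1)$ times) together with the spare colour, I would transform $c$ so that $V_1$ uses exactly $\{1,\dots,\chi(G_1)\}$ and $V_2$ uses exactly $\{\chi(G_1)+1,\dots,\chi(G)\}$, and likewise transform $c'$; shrinking an oversized palette $c(V_i)$ down to $\chi(G_i)$ colours is possible because while $V_{3-i}$ is frozen, $V_i$ has access to $\ge\chi(G_i)+1$ colours, so $\mathcal{R}_{\ell+1}(G_i)$ (connected by induction) lets us reach a $\chi(G_i)$-colouring of $G_i$. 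In the \emph{reconciliation phase}, $c$ and $c'$ now have identical colour partitions, so it remains to morph $c|_{V_1}$ into $c'|_{V_1}$ within $\{1,\dots,\chi(G_1)\}$ plus the spare colour $\chi(G)+1$ --- a walk in $\mathcal{R}_{\chi(G_1)+1}(G_1)$, connected of diameter $O(\chi(G_1)n_1)$ by induction --- and symmetrically for $V_2$; such recolourings of $V_1$ never clash with $V_2$ and vice versa. Summing the phases yields distance $O(\chi(G)n)+O(\chi(G_1)n_1)+O(\chi(G_2)n_2)=O(\chi(G)n)$, which also bounds the diameter.

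The step I expect to be the main obstacle is making the induction actually close with a bound of the shape $O(\chi(G)\cdot n)$ rather than one that multiplies by a constant at each join node (so that a cotree of depth $\Theta(n)$ would give an exponential bound). The care required is to perform \emph{all} palette manipulation ``above'' the recursion, so that each of $G_1$ and $G_2$ is recursively morphed essentially once, directly from the $c$-configuration to the $c'$-configuration, rather than routing each of $c$ and $c'$ separately through a fixed canonical colouring; getting the accounting so that the recursive terms appear with coefficient $1$ is the delicate part. A secondary delicate point is the spare-colour bookkeeping: everything hinges on the additivity $\chi(G_1*G_2)=\chi(G_1)+\chi(G_2)$ to guarantee, in the tight regime $k=\chi(G)$, that whenever we need to vacate or shift a colour there is indeed one free on the relevant neighbourhoods. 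An alternative route worth considering is induction on $n$ by deleting a twin: every cograph on at least two vertices has a pair of true or false twins $x,y$, and one can try to lift a reconfiguration sequence for $G\setminus x$ to one for $G$ by re-parking $x$ on a spare colour whenever a recolouring would collide with it --- there the obstacle is bounding how often $x$ must be re-parked.
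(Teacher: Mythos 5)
This theorem is quoted from Bonamy and Bousquet and is not proved in the paper; the closest in-paper argument is the join/disjoint-union part of Lemma~\ref{lem:main1} together with the Renaming Lemma (Lemma~\ref{lem:recolour}), which, specialized to cographs, is essentially the skeleton you propose. Within that skeleton there are two genuine problems. First, your ``crucial consequence'' is false as stated: in a $(k{+}1)$-colouring of a join it is \emph{not} forced that some colour is absent from $V_1\cup V_2$ --- with $k=\chi(G)$ the colouring may use $\chi(G_1)+1$ colours on $V_1$ and $\chi(G_2)$ on $V_2$, so all $k+1$ colours appear. Your shrinking step is the correct repair (an oversized side already sees at least $\chi(G_i)+1$ colours of its own, none used by the other side), but the alignment phase must be ordered shrink-first, and the spare colour only exists afterwards. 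Second, and more seriously, the bound $O(\chi(G)\cdot n)$ --- which you rightly identify as the entire content of the theorem --- is left open. As described, each join node costs up to three recursive morphs of each side (shrink $c$, shrink $c'$, reconcile) plus $O(n)$ renaming, i.e.\ $D(G)\le 3\bigl(D(G_1)+D(G_2)\bigr)+O(n)$; this recursion does not yield $O(\chi(G)n)$: on a balanced all-join cotree it gives roughly $n^{\log_2 6}$, and on a caterpillar cotree it is exponential. So the proposal is a plan whose decisive step is missing.

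The missing step is resolved by per-vertex (not per-subgraph) accounting, and ironically by the canonical-colouring route you rejected. In the induction of Lemma~\ref{lem:main1} restricted to cographs, a disjoint-union node adds no recolourings beyond the recursive calls, and a join node adds at most $2$ recolourings per vertex (the renaming), the spare colour existing because after the recursive calls each side uses exactly $\chi(G_i)$ colours and $\chi(G_1)+\chi(G_2)=\chi(G)\le k$. Since the chromatic number strictly increases at every join node, any vertex has at most $\chi(G)-1$ join ancestors in the cotree, so each vertex is recoloured $O(\chi(G))$ times and each of $\alpha,\beta$ reaches the canonical $\chi$-colouring in $O(\chi(G)\,n)$ steps; routing both endpoints through it only doubles the constant, with no compounding per level. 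That is exactly the ``coefficient $1$'' accounting you were looking for; your direct $c\to c'$ scheme is the one that compounds. (The paper's stated $2n$ per-vertex bound, hence $O(n^2)$, is weaker only because of the comparable-vertex and clique-attachment cases, which can occur $\Theta(n)$ times and do not arise for $P_4$-free graphs.)
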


\begin{theorem}[\cite{bonamy20142}]
Let $G$ be a $k$-colourable distance-hereditary graph on $n$ vertices. Then $\mathcal{R}_{k+1}(G)$ is connected with diameter $O(k \cdot \chi(G) \cdot n^2)$.
\end{theorem}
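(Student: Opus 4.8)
The plan is to prove the stronger \emph{canonisation} statement: for every $k$-colourable distance-hereditary graph $G$ on $n$ vertices there is a fixed $(k+1)$-colouring $\gamma_G$ such that every $(k+1)$-colouring of $G$ can be transformed into $\gamma_G$ using $O(k\cdot\chi(G)\cdot n^2)$ recolourings, and (this is the part that makes the induction close) in such a way that no vertex is recoloured more than $O(k\cdot\chi(G)\cdot n)$ times along the way. Connectivity and the diameter bound follow at once, since any two $(k+1)$-colourings $\alpha,\beta$ can be joined by concatenating $\alpha\to\gamma_G$ with the reverse of $\beta\to\gamma_G$, at the cost of twice the bound. I would prove this by induction on $n$ using the characterisation \cite{hammer1990} that every distance-hereditary graph on at least two vertices arises from a smaller one by adding a pendant vertex, a true twin, or a false twin, choosing $\gamma_G$ to extend a fixed canonical colouring of the smaller graph compatibly with the operation.

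The pendant and false-twin cases are the easy ones. If $G=G'+v$ with $v$ adjacent only to $u$, run the inductive canonising sequence of $G'$ and lift it to $G$: before any step that would recolour $u$ to $v$'s current colour, recolour $v$ to some other colour (possible since $\deg(v)=1$), and finish by setting $v$ to $\gamma_G(v)$. Since $u$ is recoloured only $O(k\chi n)$ times by the inductive per-vertex bound, $v$ is dragged only that many times, so the length increases by $O(k\chi n)$ and all invariants are kept; this already handles graphs built purely from pendant additions, such as the path $P_n$, whose $\Omega(n^2)$ lower bound (Theorem~\ref{thm:lowerbound}) shows the quadratic dependence is necessary. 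A false twin $v$ of $u$ is just as easy for the escape step, since $v$ can always be recoloured to $u$'s current colour (legal, as $u\notin N(v)=N(u)$); but because $v$ may have large degree one should not drag $v$ against every recolouring of $N(v)$, and should instead treat the whole module $M$ of mutual false twins of $u$ as one vertex: make $M$ monochromatic, recurse on the distance-hereditary graph obtained by deleting all but one vertex of $M$, and expand each recolouring of the surviving vertex into a synchronised block of $|M|$ recolourings of $M$; summed over the modules appearing in the construction this contributes $O(k\chi n^2)$.

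The main obstacle is the true-twin case, where both difficulties above appear at once: a true twin $v$ of $u$ may have large degree (so again one must work with the module $M$ of mutual twins of $u$ rather than dragging $v$ pointwise), and now $M$ induces a clique, so it cannot be kept monochromatic and the simple escape move is unavailable. One can still show $v$ always has a free colour — the colours appearing on the common external neighbourhood $S$ of $M$ number at most $(k+1)-|M|$, since $M$ is joined to $S$ and uses $|M|$ distinct colours disjoint from those on $S$, which leaves a colour free for $v$ even after it also avoids the other $|M|-1$ colours of $M$ — but turning this into a reconfiguration that respects the per-vertex bound of $O(k\chi n)$ while using only the $k+1$ available colours (one cannot in general reserve a private palette for $M$, since deleting clique vertices need not lower the chromatic number) is delicate, and this module bookkeeping — showing that all the modules' contributions sum to $O(k\cdot\chi(G)\cdot n^2)$ with no vertex exceeding $O(k\cdot\chi(G)\cdot n)$ recolourings, which is precisely where the extra factor $k$ (rather than the $O(\chi(G)\cdot n^2)$ one might first hope for) is spent — is the technical heart of the argument. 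An alternative route worth trying is to run the recursion on the split decomposition of $G$, whose prime nodes are cliques and stars, generalising the cotree-based argument for cographs; cographs, which are built by twin operations alone, are a useful sanity check either way.
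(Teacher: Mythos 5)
First, note that this statement is not proved in the paper at all: it is quoted as a known result of Bonamy and Bousquet \cite{bonamy20142}, so there is no in-paper proof to compare your argument against; what follows is an assessment of your sketch on its own terms.

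Your proposal has a genuine gap exactly where you flag it: the true-twin case is not actually handled, and the free-colour count you give does not come close to closing it. The pendant and false-twin cases work because the new vertex (or the stable module $M$, kept synchronised with $u$) only needs an extra move when $u$ itself is recoloured, so the inductive \emph{per-vertex} bound controls the overhead. For a true twin $v$ of $u$, however, $v$ must be moved whenever \emph{any} vertex of $N(v)\supseteq S$ is recoloured to $v$'s current colour, and the number of such events is bounded only by the \emph{total} length of the inductive sequence, not by a per-vertex bound. A naive lifting therefore gives $T(G_{i+1})\le 2\,T(G_i)$ at each true-twin addition, i.e.\ a potentially exponential blow-up over a construction sequence with many true twins, and it also destroys the $O(k\cdot\chi(G)\cdot n)$ per-vertex invariant on which your induction relies. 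Knowing that a free colour always exists (your counting argument for that is correct) says nothing about how often $v$ is forced to use it; the missing idea is a scheme that charges the clique-module moves against something that stays quadratic overall (this is precisely what Bonamy and Bousquet's treatment via decompositions accomplishes, and what your closing remark about the split decomposition gestures at without carrying out). As written, the proposal is a plan whose easy cases are fine in outline but whose declared ``technical heart'' is absent, so it does not establish the theorem.
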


Bonamy and Bousquet also mention that a similar result does not hold when the class is extended to $P_5$-free graphs. For every $k \ge 3$, they give a $k$-colourable $P_5$-free graph $G$ such that $\mathcal{R}_{2k}(G)$ has an isolated vertex. This gives motivation to study the class of $P_4$-sparse graphs which contains all $P_4$-free graphs but does not contain all $P_5$-free graphs.

Feghali and Fiala \cite{feghali2020} investigated the reconfiguration question for the class of weakly chordal graphs. They found an infinite family of $k$-colourable weakly chordal graphs $G$ where $\mathcal{R}_{k+1}(G)$ has an isolated vertex. The authors subsequently defined compact graphs as a subclass of weakly chordal graphs and proved the following result.

\begin{theorem}[\cite{feghali2020}]
Let $G$ be $k$-colourable compact graph on $n$ vertices. Then $\mathcal{R}_{k+1}(G)$ is connected with diameter at most $2n^2$.
\end{theorem}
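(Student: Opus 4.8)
The plan is to induct on $n=|V(G)|$, using the structural trichotomy built into the definition of a compact graph --- either $G=K_n$, or $G$ has a 2-pair $\{x,y\}$ with $N_G(x)\subseteq N_G(y)$, or $G$ has a 2-pair $\{x,y\}$ with $C_x\cup S(x,y)$ a clique on at most three vertices --- together with the fact (immediate from the definition, since every subgraph of a compact graph again satisfies the trichotomy and is weakly chordal) that induced subgraphs of compact graphs are compact and remain $k$-colourable. In the third kind of 2-pair, if $|S(x,y)|\ge 2$ then $C_x=\{x\}$ and the pair in fact falls under the second kind; the only genuinely new situation is $S(x,y)=\{z\}$ with $\{x,w,z\}$ a triangle, $N_G(x)=\{z,w\}$, $N_G(w)=\{z,x\}$, where $w$ is the vertex of $N_G(x)\setminus N_G(y)$. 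So there are three cases to handle: $G=K_n$; delete a comparable vertex $x$; delete the two vertices $\{x,w\}$ attached at $z$.

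To keep the constant at $2n^2$ I would prove a slightly stronger statement by induction: for every $k$-colourable compact graph $H$ on $m$ vertices and any two $(k+1)$-colourings there is a reconfiguration sequence of length at most $2m^2$ along which \emph{every vertex is recoloured at most $2m$ times}. For $K_n$ this is easy: fix an ordering of the vertices, and to bring $v_i$ to its target colour, if a later vertex currently blocks that colour first move it to the (unique) colour missing from the current colouring; two recolourings per vertex suffice. For the comparable-vertex case ($H=G\setminus x$) the key point is that the colour worn by $y$ is always legal for $x$, since $N_G(x)\subseteq N_G(y)$ and the colouring is proper. So: recolour $x$ to $y$'s colour, run the sequence given by induction for $H$, and every time that sequence recolours $y$ follow it immediately by recolouring $x$ to $y$'s new colour --- always legal for the same reason, and every intermediate step is legal in $G$ because the only extra neighbour of a recoloured vertex is $x$, which shadows $y$ --- and at the end recolour $x$ once more to its target. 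This adds at most one recolouring per recolouring of $y$ plus two, so the length grows from $\le 2(m-1)^2$ to $\le 2(m-1)^2+2(m-1)+2 = 2m^2-2m+2 \le 2m^2$, vertex $x$ is recoloured $\le 2(m-1)+2=2m$ times, and every other vertex keeps its previous count.

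The case of deleting $\{x,w\}$ is similar but needs a spare colour: the triangle $\{x,w,z\}$ forces $\omega(G)\ge 3$, hence $k\ge 3$ and $k+1\ge 4$. With $H=G\setminus\{x,w\}$, $z$ is the only neighbour of $x$ or $w$ lying in $H$, so while running the inductive sequence for $H$ only a step recolouring $z$ can cause a conflict, and then at most one of $x,w$ wears the offending colour; move that vertex to a colour avoiding its (at most three) forbidden colours, then recolour $z$. At the end restore $x$ and $w$ to their targets, inserting one auxiliary recolouring if they must ``swap''. Each of $x,w$ is then recoloured at most (number of times $z$ moves)$\,+\,2\le 2(m-2)+2=2m$ times, and the length grows from $\le 2(m-2)^2$ to $\le 2(m-2)^2+2(m-2)+3\le 2m^2$ whenever this case arises; small $m$ are trivial or fall under $K_n$. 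If a compact graph is disconnected one may work component by component (equivalently, $\mathcal{R}_{k+1}$ of a disjoint union is the Cartesian product of the factors' reconfiguration graphs, so the bounds add).

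I expect the only real difficulty to be bookkeeping rather than an idea: one must choose the strengthened hypothesis (the per-vertex cap of $2m$ recolourings) precisely so that the $O(m)$ overhead of each lifting step is swallowed by the slack $2m^2-2(m-1)^2=4m-2$ (respectively $2m^2-2(m-2)^2$), and one must check at every micro-step that the recolouring performed keeps the colouring proper --- which is exactly where the 2-pair hypothesis (every chordless $x$--$y$ path has length two, i.e.\ the neighbourhood-containment / single-common-neighbour structure) gets used, and where having $k+1$ colours rather than $k$ is essential.
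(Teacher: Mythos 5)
This theorem is not proved in the paper at all: it is quoted from Feghali and Fiala \cite{feghali2020}, and the paper's own machinery (the lemma that every compact graph is an OAT graph together with Theorem \ref{thm:main}) would only give diameter $4n^2$, not $2n^2$. So your argument is necessarily a different route, and it is essentially sound. Your structural trichotomy (complete graph; a comparable vertex $x$ from a 2-pair of the second type; or a triangle $\{x,w,z\}$ with $N_G(x)=\{w,z\}$ and $N_G(w)=\{x,z\}$ attached at $z$) is exactly the analysis the paper itself uses to prove that compact graphs are OAT, and your two lifting steps parallel Cases 3 and 4 of Lemma \ref{lem:main1}; what your approach buys is the constant $2$ rather than $4$, because you reconfigure $\alpha$ directly to $\beta$ under the strengthened hypothesis (length $\le 2m^2$, each vertex moved $\le 2m$ times) instead of routing both colourings through a canonical colouring. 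Two bookkeeping repairs are needed. First, the greedy you describe for the base case $K_m$ does not give ``two recolourings per vertex'': a vertex can be bumped to the missing colour repeatedly before it is processed (with palette $\{1,2,3,4\}$, $\alpha=(1,2,3)$ and $\beta=(3,4,2)$ the last vertex moves three times); either invoke the renaming argument via paths and cycles (Lemma \ref{lem:recolour}), or observe that the greedy still meets the bounds you actually use, namely at most $2m$ steps in total and hence at most $2m$ per vertex. Second, since deleting $x$ or $\{x,w\}$ may disconnect the graph, the disjoint-union case (bounds add over components, per-vertex caps immediate) must be a case of the induction rather than a closing remark; relatedly, your reduction of the third bullet tacitly uses $S(x,y)\neq\emptyset$, which is only guaranteed when $x$ and $y$ lie in the same component, so the connected and disconnected situations should be separated explicitly. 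With those adjustments the induction closes and gives the stated $2n^2$ bound.
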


\section{Recolouring OAT graphs}
\label{sec:recolour}

In this section we prove that for a $k$-colourable OAT graph $G$, $\mathcal{R}_{k+1}(G)$ is connected and has diameter $O(n^2)$. Our strategy uses a \emph{canonical $\chi(G)$-colouring} as a central vertex in the reconfiguration graph $\mathcal{R}_{k+1}(G)$. For any two colourings $\alpha$ and $\beta$ in $\mathcal{R}_{k+1}(G)$, we show how to transform both into the canonical $\chi(G)$-colouring $\gamma$. Then to transform $\alpha$ to $\beta$, follow the steps from $\alpha$ to $\gamma$ and then follow the steps from $\beta$ to $\gamma$ in reverse.

Let $\mathcal{S}$ be a set of $k$ colours and let $\alpha: V(G) \to \mathcal{S}$ be a $k$-colouring of $G$. We say that $\alpha$ is both a $k$-colouring and a \emph{$\mathcal{S}$-colouring} where $\mathcal{S}(\alpha) = \mathcal{S}$ denotes the set of permissible colours in $\alpha$. Let $\mathcal{C}(\alpha)$ be the set of colours $c$ such that $\alpha(v) = c$ for some vertex $v \in V(G)$. Thus $\mathcal{C}(\alpha) \subseteq \mathcal{S}(\alpha)$ but they need not be equal. We say that the colour $c$ \emph{appears} in $\alpha$ if $c \in \mathcal{C}(\alpha)$. 

We first define the reconfiguration graph of the $\mathcal{S}$-colourings. Let $\mathcal{R}_{\mathcal{S}}(G)$ be the graph whose vertices are the $\mathcal{S}$-colourings of $G$ such that two vertices of $\mathcal{R}_{\mathcal{S}}(G)$ are adjacent if and only if they differ by colour on exactly one vertex. By contrast, the definition of $\mathcal{R}_k(G)$ assumes the set of colours is $\{1, 2, \ldots, k\}$. If $|\mathcal{S}| = k$ then $\mathcal{R}_{\mathcal{S}}(G)$ is isomorphic to $\mathcal{R}_{k}(G)$. The reason behind using this more general notation has to do with the names of the colours and can be understood from the following example. Suppose $G$ is constructed by the join of $L$ and $R$. Suppose (to be concrete) that we have colourings $\alpha$ and $\beta$ of $G$ on $k=5$ colours. To reconfigure $\alpha$ to $\beta$, we will recurse on $L$ and $R$. Now, it may happen that $\alpha_L$ uses colours $\{1,4\}$ and $\beta_L$ uses colours $\{2,5\}$. We need notations to distinguish these. Our method uses two steps: to reconfigure $\alpha_L$ to a colouring that matches $\beta_L$ combinatorially, and then to rename the colours.

We say that a colouring $\alpha$ of $G$ can be \emph{transformed} into a colouring $\beta$ of $G$ in $\mathcal{R}_k(G)$ if there is a path from $\alpha$ to $\beta$ in $\mathcal{R}_k(G)$. Let $H$ be a subgraph of $G$. We use the notation $n_H$ to denote the number of vertices of $H$. The \emph{projection} of $\alpha$ onto $H$ is the colouring $\alpha_H: V(H) \to \mathcal{S}(\alpha_H)$ such that $\alpha_H(v) = \alpha(v)$ and where $\mathcal{S}(\alpha_H)$ is fixed and specified so that $\mathcal{S}(\alpha_H) \subseteq \mathcal{S}(\alpha)$. A \emph{build-sequence} of an OAT graph $G$ is a finite sequence of the four defined operations that construct $G$. A build-sequence leads to a naturally defined ordering of the vertices of $G$. For a given build-sequence, we fix this ordering of vertices to force uniqueness on the \emph{canonical $\chi(G)$-colouring of $G$}.

\begin{definition}
Let $G$ be an OAT graph and let $\mathcal{C}$ be an ordered set of $\chi(G)$ colours. Fix a build-sequence $\sigma$ of $G$. The \emph{canonical $\chi$-colouring of $G$} with respect to $\mathcal{C}$ and $\sigma$, is the $\chi(G)$-colouring of $G$ constructed recursively through $\sigma$ as follows.
\begin{enumerate}
    \item If $G$ is the disjoint union of $L$ and $R$, then take a canonical $\chi$-colouring of $L$ with respect to the first $\chi(L)$ colours of $\mathcal{C}$ and a canonical $\chi$-colouring of $R$ with respect to the first $\chi(R)$ colours of $\mathcal{C}$.
    \item If $G$ is the join of $L$ and $R$, then take a canonical $\chi$-colouring of L with respect to the first $\chi(L)$ colours in $\mathcal{C}$ and take a canonical $\chi$-colouring of $R$ with respect to the next $\chi(R)$ colours in $\mathcal{C}$.
    \item If $G$ is constructed by adding a comparable vertex $u$ to a vertex $v$ of a graph $H$, then take a canonical $\chi$-colouring of $H$ with respect to $\mathcal{C}$ and colour $u$ the same colour as $v$.
    \item If $G$ is constructed from attaching a complete graph $Q$ to a vertex $v$ of a graph $H$, take a canonical $\chi$-colouring of $H$ with respect to the first $\chi(H)$ colours of $\mathcal{C}$. Consider the vertices of $Q$ as $\{q_1, q_2, \ldots\}$. Colour the vertices $q_1, q_2, \ldots$ of $Q$ in order with the first $|Q|$ colours of $\mathcal{C}\setminus c$ where $c$ is the colour given to $v$ in the canonical $\chi$-colouring of $H$.
\end{enumerate}
\end{definition}

Note that the canonical $\chi$-colouring of $G$ with respect to $\mathcal{C}$ and $\sigma$ is unique since by induction, at each step in the construction, there is no choice on which colour a vertex is assigned. For the rest of this section, we assume the build-sequence $\sigma$ of $G$ is fixed. 

Our proofs use induction to recolour the subgraphs that build up the OAT graph in a fixed construction. There are generally two steps to these proofs. The first step is to recolour the vertices so that the partition of vertices into colour classes is the same as the target colouring, namely the canonical $\chi$-colouring. The second step is to \emph{rename} these colours so that the correct colour appears on the correct colour class. We rely on the \emph{Renaming Lemma} which states that once the vertices are partitioned into the desired colour classes, we can rename each colour class to the desired colour by recolouring each vertex at most twice.

The Renaming Lemma is an adaptation of an idea that is used in token swapping. It was discovered by Akers and Krishnamurthy \cite{akers1989}, independently by Portier and Vaughan \cite{portier1990}, and later by Pak \cite{pak1999}. It was also rediscovered by Bonamy and Bousquet \cite{bonamy2018} who rephrased the lemma in terms of recolouring complete graphs. Our statement is expressed more generally.

\begin{lemma}[Renaming Lemma \cite{bonamy2018}]
\label{lem:recolour}
If $\alpha$ and $\beta$ are two $k$-colourings of $G$ that induce the same partition of vertices into colour classes, and if $\mathcal{S}$ is a set of colours such that $\mathcal{S}(\alpha), \mathcal{S}(\beta) \subseteq \mathcal{S}$ and $|\mathcal{S}| > k$, then $\alpha$ can be transformed into $\beta$ in $\mathcal{R}_{\mathcal{S}}(G)$ by recolouring each vertex at most 2 times.
\end{lemma}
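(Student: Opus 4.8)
The plan is to think of the common colour partition of $\alpha$ and $\beta$ as a set of colour classes $C_1,\dots,C_t$ (with $t\le k$), where $\alpha$ assigns colour $a_i$ to $C_i$ and $\beta$ assigns colour $b_i$ to $C_i$. Since $|\mathcal{S}|>k\ge t$, there is a colour $\star\in\mathcal{S}$ that is not used by $\alpha$ (we could even pick one unused by $\beta$ as well, but one unused by $\alpha$ suffices to get started). The key combinatorial object is the permutation on the colour set induced by the renaming: the map sending $a_i\mapsto b_i$. Decompose this permutation into disjoint cycles. Colours that are fixed (i.e. $a_i=b_i$) need no work at all. For a nontrivial cycle $a_{i_1}\to a_{i_2}\to\cdots\to a_{i_r}\to a_{i_1}$, the idea borrowed from token-swapping / the ``hole'' trick is: move the class currently coloured $a_{i_1}$ out of the way onto the spare colour $\star$, then recolour $C_{i_r}$ to its target $a_{i_1}$, then $C_{i_{r-1}}$ to $a_{i_r}$, and so on down the cycle, finally recolouring the parked class from $\star$ to its target. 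Each class in the cycle is recoloured once, except the one parked at $\star$, which is recoloured twice; crucially $\star$ is now free again for the next cycle.

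The key steps, in order, are as follows. First, verify that each single recolouring step in the above procedure is proper: when we recolour an entire colour class $C_j$ from its current colour to some colour $c$, the result is a proper colouring precisely because $C_j$ is an independent set and, at the moment we do it, no vertex outside $C_j$ currently carries colour $c$ — the latter holds because within a cycle we always recolour ``into'' a colour that we have just vacated (or into the spare colour $\star$, which starts unused). Second, observe that recolouring a whole class $C_j$ one colour at a time, vertex by vertex, is a legal walk in $\mathcal{R}_{\mathcal{S}}(G)$: at each intermediate stage the partially recoloured vertices have the new colour $c$, which is fine since $c$ appears nowhere else, the not-yet-recoloured vertices still have the old colour, and $C_j$ is independent so there is no conflict within $C_j$. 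Third, tally the recolourings: each vertex lies in exactly one class $C_j$; if that class is fixed by the permutation it is never recoloured, if it lies on a cycle it is recoloured once unless it is the ``parked'' class of its cycle, in which case it is recoloured exactly twice. Hence every vertex is recoloured at most twice, and summing over all cycles gives a path from $\alpha$ to $\beta$ in $\mathcal{R}_{\mathcal{S}}(G)$.

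The main obstacle — really the only thing that needs care — is bookkeeping the availability of the spare colour $\star$ and of the just-vacated colours, i.e. making sure that at the instant we recolour a class to colour $c$, colour $c$ is genuinely absent from $V(G)\setminus C_j$. This is why we must process one cycle at a time and, within a cycle, strictly in the order ``park the head at $\star$, then fill classes from the tail backwards, then unpark''; doing the recolourings in a different order could momentarily create two classes sharing a colour. One should also handle the trivial edge cases cleanly: if $\alpha=\beta$ there is nothing to do, and classes fixed by the permutation are simply skipped. Everything else is the routine verification described above, and the bound of at most two recolourings per vertex falls out immediately from the cycle decomposition.
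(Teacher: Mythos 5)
Your core idea (recolour whole colour classes, break renaming cycles with a spare colour, count at most two recolourings per vertex, and check that a class can be recoloured vertex-by-vertex inside $\mathcal{R}_{\mathcal{S}}(G)$) is exactly the paper's argument. But there is a gap in how you set it up: you declare that the map $a_i\mapsto b_i$ is a \emph{permutation} of the colour set and decompose it into cycles plus fixed points. That is only true when $\mathcal{C}(\alpha)=\mathcal{C}(\beta)$, and the lemma explicitly allows the two colourings to use different colour names (only $\mathcal{S}(\alpha),\mathcal{S}(\beta)\subseteq\mathcal{S}$ is assumed); indeed this mismatched-names situation is the very reason the lemma is stated this way and is the case that arises in its application (e.g.\ $\alpha_L$ using colours $\{1,4\}$ versus a target using $\{1,2\}$). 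When $\mathcal{C}(\alpha)\neq\mathcal{C}(\beta)$ the functional digraph of $a_i\mapsto b_i$ is a disjoint union of directed \emph{paths} and cycles, not a permutation; the paper's proof treats these two component types separately (paths are recoloured from the free end with one recolouring per vertex and no spare colour, cycles are broken with a currently unused colour).

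The path components also interact with your choice of the spare colour: you fix a single $\star$ unused by $\alpha$ but possibly used by $\beta$. If some path component is processed first, a class may already have been recoloured to $\star$ by the time you want to park a cycle there, and the parking step is then improper. The fix is easy and is what the paper does: either handle cycles before paths, or (more robustly) choose, at the moment each cycle is broken, a colour that does not appear in the \emph{current} colouring, which always exists because at most $k<|\mathcal{S}|$ colours are in use at any time. With the paths-and-cycles decomposition and this dynamic choice of spare colour, your argument becomes the paper's proof; as written, it only covers the special case $\mathcal{C}(\alpha)=\mathcal{C}(\beta)$.
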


\begin{proof}
Let $V_1, V_2, \ldots, V_k$ be the partition of the vertices of $G$ into $k$ colour classes induced by both $\alpha$ and $\beta$. Let $Q$ be the complete graph on $k$ vertices $\{q_1, q_2, \ldots q_k\}$ and let $\alpha_Q$ (resp. $\beta_Q$) be the colouring of $Q$ where $q_i$ is coloured the same as each vertex of $V_i$ in $\alpha$ (resp. $\beta$) for all $i=1\ldots k$. Suppose that $\alpha_Q$ can be transformed into $\beta_Q$ in $\mathcal{R}_{\mathcal{S}}(Q)$. Then to transform $\alpha$ into $\beta$ in $\mathcal{R}_{\mathcal{S}}(G)$, follow the steps from $\alpha_Q$ to $\beta_Q$. Whenever $q_i$ is recoloured, then recolour every vertex in $V_i$ the same colour. Therefore, it is enough to show how to transform $\alpha_Q$ into $\beta_Q$ in $\mathcal{R}_{\mathcal{S}}(Q)$. 

Let $D$ be the directed graph on $k$ vertices such that there is an arc $q_jq_i$ in $D$ if and only if in the current colouring of $Q$, $q_j$ is coloured $\beta(q_i)$. Since no two vertices of $Q$ are coloured the same colour in any colouring, $d^-(q_i) \le 1$ and $d^+(q_i) \le 1$ for all $i=1 \ldots k$. Therefore, $D$ is the disjoint union of directed paths and directed cycles. Note that for any vertex $q$, if $d^-(q)=0$ in $D$, then it can be immediately recoloured into $\beta_Q(q)$. 

Recolour each directed path $v_1, v_2, \ldots, v_p$ as follows. Since $d^-(v_1)=0$ recolour it $\beta_Q(v_1)$. Now we have that $d^-(v_2)=0$ so recolour $v_{2}$ with $\beta_Q(v_2)$. Continue recolouring this way until all vertices in the path are coloured as in $\beta_Q$. Note that each vertex in the directed path was recoloured at most once.

Recolour each directed cycle $v_1, v_2, \ldots, v_p, v_1$ as follows. Since $|\mathcal{S}| > k$, $v_p$ can be recoloured some colour that does not appear in the current colouring. Now $d^-(v_1)=0$ so the directed cycle becomes a directed path $v_1, v_2, \ldots, v_p$. Recolour each vertex as described in the case of a directed path.

Note that each vertex in a directed path was recoloured at most once. Also only one vertex in each directed cycle was recoloured at most twice, and each other vertex in the cycle was recoloured at most once.
\end{proof}

Before stating the next lemma, we note a lower bound on the diameter of $\mathcal{R}_{k+1}(G)$. The path $P_n$ for all $n \ge 1$ is an OAT graph with $\chi(P_n) = 2$.  Bonamy et al.~\cite{bonamy2014} proved that $\mathcal{R}_3(P_n)$ has diameter $\Omega(n^2)$ (see Theorem \ref{thm:lowerbound}). Thus for a general $k$-colourable OAT graph $G$, the diameter of $\mathcal{R}_{k+1}(G)$ is $\Omega(n^2)$.

\begin{lemma}
\label{lem:main1}
Let $G$ be a $k$-colourable OAT graph. Let $\mathcal{S}$ be a set of $k+1$ colours and let $\mathcal{C}$ be an ordered set of $\chi(G)$ colours such that $\mathcal{C} \subseteq \mathcal{S}$. Then any colouring in $\mathcal{R}_{\mathcal{S}}(G)$ can be transformed into the canonical $\chi$-colouring of $G$ with respect to $\mathcal{C}$ by recolouring each vertex at most $2n$ times.
\end{lemma}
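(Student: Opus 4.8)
The plan is to induct on the length of a build-sequence of $G$. The base case (a single vertex) is trivial; each of the four operations gives a case reducing to one or two recursive instances on graphs with strictly fewer vertices, plus a bounded amount of colour permutation handled by the Renaming Lemma (Lemma~\ref{lem:recolour}). The accounting is uniform: a recursive call on a subgraph $H$ with $n_H<n$ vertices recolours each vertex at most $2n_H$ times, leaving slack $2n-2n_H\ge 2$ to absorb the $O(1)$ extra recolourings of the vertices introduced by the last operation, and each appeal to the Renaming Lemma costs at most two recolourings per vertex.

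If $G$ is the disjoint union of $L$ and $R$, recurse on $L$ (with the first $\chi(L)$ colours of $\mathcal{C}$) and on $R$ (with the first $\chi(R)$ colours of $\mathcal{C}$) independently, so each vertex is recoloured at most $2\max\{n_L,n_R\}\le 2n$ times. If $G$ is obtained by adding a comparable vertex $u$ to $v$ (so $u\not\sim v$ and $N(u)\subseteq N(v)$), first recolour $u$ to the current colour of $v$, then replay the recursive recolouring of $H=G\setminus u$ to its canonical $\chi$-colouring while \emph{shadowing} each recolouring of $v$ by an immediate recolouring of $u$ to the same colour: the colour of $u$ always equals that of $v$, so shadowing is legal and $u$ never blocks a recolouring inside $H$, and the canonical $\chi$-colouring of $G$ assigns $u$ the colour of $v$. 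Then $u$ is recoloured at most $1+2n_H\le 2n$ times and every other vertex at most $2n_H$ times.

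If $G$ is obtained by attaching a clique $Q$ to $v$, replay the recursive recolouring of $H=G\setminus V(Q)$ to its canonical $\chi$-colouring; since $v$ is the only vertex of $H$ adjacent to $Q$, the only trouble is when $v$ must be recoloured to a colour currently used by the unique $q\in Q$ having it, in which case we first move $q$ to a free colour distinct from $v$'s target — at least $(k+1)-|Q|\ge 2$ colours are free for $q$, as $|Q|\le\omega(G)-1\le k-1$ — and then recolour $v$. Afterwards $v$ has its canonical colour $c_v$ and $Q$ carries a proper colouring of $K_{|Q|}$ avoiding $c_v$; its colour classes already form the canonical partition and only the $k>|Q|$ colours of $\mathcal{S}\setminus\{c_v\}$ are available to $Q$, so by the Renaming Lemma we permute $Q$'s colours to the canonical ones with at most two recolourings per vertex. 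A vertex of $Q$ is then recoloured at most $2n_H+2\le 2n$ times (as $|Q|\ge 1$), $v$ at most $2n_H$ times, and every other vertex at most $2n_H$ times.

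Finally suppose $G$ is the join of $L$ and $R$, so $\chi(G)=\chi(L)+\chi(R)\le k$. In any colouring of $G$ the colour sets used on $L$ and on $R$ are disjoint, hence $|\mathcal{C}(\alpha_L)|+|\mathcal{C}(\alpha_R)|\le k+1$; consequently at least one of $|\mathcal{C}(\alpha_L)|\le k-\chi(R)$ and $|\mathcal{C}(\alpha_R)|\le k-\chi(L)$ holds, since otherwise the two left sides would sum to at least $2k-\chi(L)-\chi(R)+2\ge k+2$. Say $|\mathcal{C}(\alpha_L)|\le k-\chi(R)$. Then the $(k+1)-|\mathcal{C}(\alpha_L)|\ge\chi(R)+1$ colours of $\mathcal{S}$ avoided by $\alpha_L$ are exactly those available to $R$, so we recurse on $R$ within them to reach a canonical $\chi$-colouring of $R$ on $\chi(R)$ of them; now $R$ uses exactly $\chi(R)$ colours, leaving $L$ at least $(k+1)-\chi(R)\ge\chi(L)+1$ colours, so we recurse on $L$ to reach a canonical $\chi$-colouring of $L$. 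Now $L$ and $R$ are coloured by their canonical partitions using $\chi(G)\le k<|\mathcal{S}|$ colours in all, and one application of the Renaming Lemma reaches the canonical $\chi$-colouring of $G$; each vertex is recoloured at most $2n_L+2\le 2n$ or $2n_R+2\le 2n$ times. The join is the crux: the naive approach of recolouring $L$ straight onto its target colours and then $R$ onto its target colours can fail, because each side may want colours the other currently occupies with no spare colour to pivot through; the pigeonhole observation that at least one side can always be cleared first, after which the other has room, is what makes the recursion go through. The clique case is a milder instance of the same issue, where the attachment vertex may change colour up to $2n_H$ times and one must check the $2n$ budget is never exceeded, while the disjoint-union and comparable-vertex cases are routine.
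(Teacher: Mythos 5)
Your proof is correct and follows essentially the same route as the paper: induction over the build-sequence with the same four cases, shadowing the comparable vertex, displacing at most one clique vertex each time the attachment vertex is recoloured, finishing the join and clique cases with the Renaming Lemma, and the same $2n$ accounting. The only difference is cosmetic: in the join case the paper splits on whether $L$ (or $R$) uses more than $\chi(L)$ (resp.\ $\chi(R)$) colours to supply each recursive call with a spare colour, whereas you use a pigeonhole count on $|\mathcal{C}(\alpha_L)|+|\mathcal{C}(\alpha_R)|\le k+1$ to decide which side to recolour first; both arguments guarantee each side a palette exceeding its chromatic number, so the bounds coincide.
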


\begin{proof}
The proof is by induction on the number of vertices $n$ of $G$. Let $\alpha: V(G) \to \mathcal{S}$ be a $(k+1)$-colouring of $G$. We show how to transform $\alpha$ into the canonical $\chi$-colouring $\gamma$ of $G$ with respect to $\mathcal{C}$ by recolouring each vertex at most $2n$ times.

\begin{case}
Suppose $G$ is constructed by the disjoint union of the graphs $L$ and $R$. Note that $L$ and $R$ can be recoloured independently since there are no edges between $L$ and $R$. Since $\chi(G) = \max\{\chi(L), \chi(R)\}$, it follows that $|\mathcal{S}| \ge \chi(L) + 1$ and $|\mathcal{S}| \ge \chi(R) + 1$. Let $\alpha_L$ be the projection of $\alpha$ onto $L$ and let $\mathcal{S}(\alpha_L) = \mathcal{S}$ be the set of permissible colours. Then $\alpha_L$ is a $(k_L+1)$-colouring of $L$ for some $k_L \ge \chi(L)$ in $\mathcal{R}_{\mathcal{S}}(L)$. Let $\alpha_R$ be the projection of $\alpha$ onto $R$ and let $\mathcal{S}(\alpha_R) = \mathcal{S}$ be the set of permissible colours. Then $\alpha_R$ is a $(k_R+1)$-colouring of $R$ for some $k_R \ge \chi(R)$ in $\mathcal{R}_{\mathcal{S}}(R)$. By the induction hypothesis, we can transform $\alpha_L$ into the canonical $\chi$-colouring of $L$ with respect to the first $\chi(L)$ colours of $\mathcal{C}$ by recolouring each vertex of $L$ at most $2n_L$ times. Similarly, by the induction hypothesis, we can transform $\alpha_R$ into the canonical $\chi$-colouring of $R$ with respect to the first $\chi(R)$ colours of $\mathcal{C}$ by recolouring each vertex of $R$ at most $2n_R$ times. Taking these two sequences together gives the recolouring sequence to transform $\alpha$ into the canonical $\chi$-colouring of $G$ with respect to $\mathcal{C}$. Each vertex of $L$ has been recoloured at most $2n_L < 2n$ times and each vertex of $R$ has been recoloured at most $2n_R < 2n$ times. 
\end{case}

\begin{case}
Suppose $G$ is constructed by the join of the graphs $L$ and $R$. Note that $\mathcal{C}(\alpha_L)$ is disjoint from $\mathcal{C}(\alpha_R)$ since there are all possible edges between $L$ and $R$. We consider two cases depending on the number of colours appearing in $\alpha_L$ and $\alpha_R$.

First suppose $|\mathcal{C}(\alpha_L)| = \chi(L)$ and $|\mathcal{C}(\alpha_R)| = \chi(R)$. Then there exists some colour $c$ that does not appear in $\alpha$ since $\chi(G) = \chi(L)+\chi(R)$ and $|\mathcal{S}| > \chi(G)$. Let $\alpha_L$ be the projection of $\alpha$ onto $L$ and let $\mathcal{S}(\alpha_L) = \mathcal{C}(\alpha_L) \cup \{c\}$ be the set of permissible colours. Then $\alpha_L$ is a $(k_L+1)$-colouring for some $k_L \ge \chi(L)$ in $\mathcal{R}_{\mathcal{S}}(L)$. By the induction hypothesis, $\alpha_L$ can be transformed into the canonical $\chi$-colouring of $L$ with respect to the first $\chi(L)$ colours of $\mathcal{C}(\alpha_L)$ by recolouring each vertex at most $2n_L$ times. Now again some colour $c'$ does not appear in the current colouring of $G$. Let $\alpha_R$ be the projection of $\alpha$ onto $R$ and let $\mathcal{S}(\alpha_R) = \mathcal{C}(\alpha_R) \cup \{c'\}$ be the set of permissible colours. Then $\alpha_R$ is a $(k_R+1)$-colouring for some $k_R \ge \chi(R)$ in $\mathcal{R}_{\mathcal{S}}(R)$. By the induction hypothesis, $\alpha_R$ can be transformed into the canonical $\chi$-colouring of $R$ with respect the first $\chi(R)$ colours of $\mathcal{C}(\alpha_R)$ by recolouring each vertex at most $2n_R$ times. 

Now suppose $|\mathcal{C}(\alpha_L)| > \chi(L)$ or $|\mathcal{C}(\alpha_R)| > \chi(R)$ (suppose the former). Let $\alpha_L$ be the projection of $\alpha$ onto $L$ and let $\mathcal{S}(\alpha_L)=\mathcal{C}(\alpha_L)$ be the set of permissible colours. Then $\alpha_L$ is a $(k_L+1)$-colouring for some $k_L \ge \chi(L)$ in $\mathcal{R}_{\mathcal{S}(\alpha_L)}(L)$. By the induction hypothesis, $\alpha_L$ can be transformed into the canonical $\chi$-colouring of $L$ with respect the first $\chi(L)$ colours of $\mathcal{C}(\alpha_L)$ by recolouring each vertex at most $2n_L$ times. Now some colour $c^*$ that appeared in $\alpha_L$ no longer appears in the current colouring of $G$. Let $\alpha_R$ be the projection of $\alpha$ onto $R$ and let $\mathcal{S}(\alpha_R) = \mathcal{C}(\alpha_R) \cup \{c^*\}$ be the set of permissible colours. Then $\alpha_R$ is a $(k_R+1)$-colouring for some $k_R \ge \chi(R)$ in $\mathcal{R}_{\mathcal{S}(\alpha_R)}(R)$. By the induction hypothesis, $\alpha_R$ can be transformed into the canonical $\chi$-colouring of $R$ with respect the first $\chi(R)$ colours of $\mathcal{C}(\alpha_R)$ by recolouring each vertex at most $2n_R$ times. A similar argument holds if instead $|\mathcal{C}(\alpha_R)| > \chi(R)$.

To complete this part of the proof, we now have a colouring $\alpha'$ of $G$ such that $\alpha'_L$ is a canonical $\chi(L)$-colouring of $L$ and $\alpha'_R$ is a canonical $\chi(R)$-colouring of $R$. Then $\alpha'$ and the canonical $\chi(G)$-colouring $\gamma$ of $G$ must partition the vertices of $G$ into the same colour classes. Then by the Renaming Lemma (Lemma \ref{lem:recolour}), we can transform $\alpha'$ into $\gamma$ by recolouring each vertex at most twice. Therefore we can transform $\alpha$ into $\gamma$ by recolouring each vertex of $G$ at most $2\max\{n_L,n_R\} + 2 \le 2n$ times.
\end{case}

\begin{case}
Suppose $G$ is constructed by adding a vertex $v^*$ comparable to a vertex $v$ of the OAT graph $H = G \setminus \{v^*\}$. First recolour $v^*$ the same colour as $v$. This is possible since $v^*$ and $v$ are non-adjacent and $N(v^*) \subseteq N(v)$. Let $\alpha_H$ be the projection of $\alpha$ onto $H$ and let $\mathcal{S}(\alpha_H) = \mathcal{S}(\alpha)$ be the set of permissible colours. Note that since $\chi(H) = \chi(G)$, then $\alpha_H$ is a $(k+1)$-colouring in $\mathcal{R}_{\mathcal{S}(\alpha_H)}(H)$. By the induction hypothesis, $\alpha_H$ can be transformed into the canonical $\chi(H)$-colouring with respect to $\mathcal{C}$ by recolouring each vertex of $H$ at most $2n_H$ times. To extend this recolouring sequence to $G$, whenever $v$ is recoloured, recolour $v^*$ the same colour. By definition, this is the canonical $\chi$-colouring of $G$ with respect to $\mathcal{C}$. Each vertex of $H$ was recoloured at most $2n_H < 2n$ times and $v^*$ was recoloured at most $2n_H+1 < 2n$ times.
\end{case}

\begin{case}
Suppose $G$ is constructed by attaching a complete graph $Q$ to some vertex $v$ of a graph $H$. Let $\alpha_H$ be the projection of $\alpha$ onto $H$ and let $\mathcal{S}(\alpha_H) = \mathcal{S}$ be the set of permissible colours. Then $\alpha_H$ is a $(k_H + 1)$-colouring for some $k_H \ge \chi(H)$ in $\mathcal{R}_{\mathcal{S}(\alpha_H)}(H)$. By the induction hypothesis, $\alpha_H$ can be transformed into the canonical $\chi$-colouring $\gamma_H$ of $H$ with respect to the first $\chi(H)$-colours of $\mathcal{C}$. To extend this recolouring sequence to the entire graph, whenever $v$ is recoloured to some colour $c$, we may need to first recolour at most one vertex $q$ of $Q$ that is coloured $c$. Since $\chi(G) = \max\{\chi(H), n_Q+1\}$ and $|\mathcal{S}| \ge \chi(G)+1 \ge n_Q+2$, and each vertex of $Q$ has degree $n_Q$, there exists some colour $c'$ that does not appear on the neighbourhood of $q$ and is not the colour $c$. Recolour $q$ with the colour $c'$ and then continue by recolouring $v$ colour $c$. Now $H$ is coloured with the canonical $\chi(H)$-colouring $\gamma_H$. 

Let $c^* = \gamma_H(v)$ and let $\alpha_Q'$ be the current colouring of $Q$ with $\mathcal{S}(\alpha_Q') = \mathcal{S} \setminus \{c^*\}$ as the set of permissible colours. Recall that the vertices of $Q$ are given by $\{q_1, q_2, \ldots \}$. The canonical $\chi$-colouring of $Q$ with respect to $\mathcal{C}$ is the colouring $\gamma_Q$ such that $q_i$ is coloured the $i$th colour of $\mathcal{C} \setminus \{c^*\}$. Since $|\mathcal{S}| \ge n_Q + 2$, then $|\mathcal{S} \setminus \{c^*\}| \ge n_Q + 1$. By the Renaming Lemma (Lemma \ref{lem:recolour}), $\alpha_Q'$ can be transformed into $\gamma_Q$ by recolouring each vertex of $Q$ at most twice. Since each vertex of $Q$ is only adjacent to $v$ in $H$ and $c^*$ was never used in this recolouring of $Q$, this recolouring sequence can extend to $G$. Now by definition, the current colouring of $G$ is the canonical $\chi$-colouring of $G$ with respect to $\mathcal{C}$. Each vertex of $H$ was recoloured at most $2n_H$ times and each vertex of $Q$ was recoloured at most $2n_H + 2 \le 2n$ times.
\end{case}
\end{proof}

\begin{theorem}
\label{thm:main}
Let $G$ be a $k$-colourable OAT graph. Then $\mathcal{R}_{k+1}(G)$ is connected with diameter at most $4n^2$.
\end{theorem}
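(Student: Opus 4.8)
The plan is to deduce Theorem~\ref{thm:main} directly from Lemma~\ref{lem:main1} using the standard ``route everything through a canonical colouring'' strategy described at the start of Section~\ref{sec:recolour}. Fix a set $\mathcal{S}$ of $k+1$ colours and an ordered subset $\mathcal{C}\subseteq\mathcal{S}$ of size $\chi(G)$, and let $\gamma$ be the canonical $\chi$-colouring of $G$ with respect to $\mathcal{C}$ (and the fixed build-sequence $\sigma$). Since $\mathcal{R}_{k+1}(G)$ is isomorphic to $\mathcal{R}_{\mathcal{S}}(G)$, it suffices to work in $\mathcal{R}_{\mathcal{S}}(G)$. Given two arbitrary $(k+1)$-colourings $\alpha$ and $\beta$, Lemma~\ref{lem:main1} provides a walk in $\mathcal{R}_{\mathcal{S}}(G)$ from $\alpha$ to $\gamma$ in which each vertex is recoloured at most $2n$ times, and likewise a walk from $\beta$ to $\gamma$ with the same bound. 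Concatenating the first walk with the reverse of the second yields a walk from $\alpha$ to $\beta$; this shows $\mathcal{R}_{k+1}(G)$ is connected.

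For the diameter bound I would count recolourings rather than reason about path lengths abstractly: a walk in which each of the $n$ vertices is recoloured at most $t$ times has length at most $nt$ (each recolouring of a vertex is one edge of $\mathcal{R}_{\mathcal{S}}(G)$). Hence the $\alpha$-to-$\gamma$ walk has length at most $2n^2$ and the $\beta$-to-$\gamma$ walk has length at most $2n^2$, so the concatenated $\alpha$-to-$\beta$ walk has length at most $4n^2$. Since $\alpha,\beta$ were arbitrary, $\mathrm{diam}(\mathcal{R}_{k+1}(G))\le 4n^2$, which is the claimed bound.

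There is essentially no real obstacle here: the substance of the argument is entirely in Lemma~\ref{lem:main1}, and this theorem is just the symmetrisation step. The only points that need a word of care are (i) checking that $\mathcal{R}_{k+1}(G)$ and $\mathcal{R}_{\mathcal{S}}(G)$ genuinely coincide up to isomorphism when $|\mathcal{S}|=k+1$, which is noted in the text, and (ii) observing that a $k$-colourable OAT graph is also $(k+1)$-colourable (trivially) and that $\chi(G)\le k$ so that an ordered $\mathcal{C}\subseteq\mathcal{S}$ of size $\chi(G)$ exists; both are immediate. The matching lower bound $\Omega(n^2)$ (so that $O(n^2)$ is tight) was already recorded via Theorem~\ref{thm:lowerbound} applied to $P_n$, so no further work is needed for tightness.
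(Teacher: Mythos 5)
Your proposal is correct and matches the paper's proof of Theorem~\ref{thm:main}: both fix $\mathcal{S}=\{1,\dots,k+1\}$ and $\mathcal{C}\subseteq\mathcal{S}$, apply Lemma~\ref{lem:main1} to route $\alpha$ and $\beta$ through the canonical $\chi$-colouring, and bound the diameter by $2\cdot 2n\cdot n=4n^2$ by counting recolourings. No gaps; the extra remarks on the isomorphism $\mathcal{R}_{k+1}(G)\cong\mathcal{R}_{\mathcal{S}}(G)$ and on tightness are consistent with the paper.
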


\begin{proof}
Fix $\mathcal{S} = \{1, 2, \ldots, k+1\}$ to be the set of permissible colours used in the colourings of $\mathcal{R}_{k+1}(G)$ and let $\mathcal{C}$ be an ordered set of $\chi(G)$ colours such that $\mathcal{C} \subseteq \mathcal{S}$. Let $\alpha,\beta:V(G) \to \mathcal{S}$ be two $(k+1)$-colourings of $G$. Then by Lemma \ref{lem:main1}, we can transform both $\alpha$ and $\beta$ into the canonical $\chi$-colouring of $G$ with respect to $\mathcal{C}$ in $\mathcal{R}_{\mathcal{S}}(G)$ by recolouring each vertex at most $2n$ times. Therefore $\mathcal{R}_{k+1}(G)$ is connected with diameter at most $4n^2$.
\end{proof}

\section{Recognizing OAT graphs}
\label{sec:recognize}

In this section we show that OAT graphs can be recognized in $O(n^3)$ time. We first survey recognition algorithms for various graph classes discussed in this paper. The following graph classes have linear time recognition algorithms: chordal graphs \cite{rose1976}, co-chordal graphs \cite{hoang1990}, $P_4$-free graphs \cite{corneil1985}, $P_4$-sparse graphs \cite{jamison1992_2}, and distance-hereditary graphs \cite{hammer1990}. Chordal bipartite graphs can be recognized in $O(\min\{n^2, (n+m)\log{n}\})$ time \cite{lubiw1987, paige1987, spinrad1993},  weakly chordal graphs can be recognized in $O(mn^2)$ time \cite{spinrad1995}, and perfect graphs can be recognized in $O(n^9)$ time \cite{chudnovsky2005}. 

Next we prove several useful lemmas that together imply that OAT graphs can be recognized by greedily deconstructing them.

\begin{lemma}
\label{lem:op1}
A graph $G$ is an OAT graph if and only if every connected component of $G$ is an OAT graph.
\end{lemma}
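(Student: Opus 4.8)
The plan is to prove both directions of the biconditional directly from the definition of OAT graphs, using the fact that the disjoint union operation is exactly the tool for merging components while the other three operations never connect two previously separate components.

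\medskip

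For the forward direction, suppose $G$ is an OAT graph and let $G_1, \ldots, G_t$ be its connected components. I would argue by induction on the length of a build-sequence $\sigma$ for $G$. The base case is a single vertex, which is connected and trivially OAT. For the inductive step, consider the last operation of $\sigma$. If it is a disjoint union of OAT graphs $A$ and $B$, then each component of $G$ is a component of $A$ or of $B$; by the induction hypothesis applied to the (shorter) build-sequences of $A$ and $B$, every such component is an OAT graph. If the last operation is a join, then $G$ is connected (assuming $A, B$ nonempty), so the claim is vacuous. If the last operation adds a comparable vertex $u$ to $v \in V(H)$ or attaches a clique $Q$ to $v \in V(H)$, then $u$ (resp.\ every vertex of $Q$) is adjacent to $v$, so $u$ (resp.\ $Q$) lies in the same component as $v$; thus the components of $G$ are obtained from the components of $H$ by possibly enlarging one of them with $u$ (resp.\ with $Q$), which is itself an application of an OAT operation. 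Hence that enlarged component is OAT, and the remaining components are OAT by the induction hypothesis on $H$.

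\medskip

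For the reverse direction, suppose every connected component $G_1, \ldots, G_t$ of $G$ is an OAT graph. Each $G_i$ has a build-sequence, and we assemble $G$ by starting from $G_1$ and repeatedly taking the disjoint union with $G_2$, then $G_3$, and so on. Since disjoint union of vertex-disjoint OAT graphs is one of the four permitted operations, and each $G_i$ is constructible by a finite sequence of operations, concatenating these sequences with the $t-1$ disjoint-union steps yields a finite build-sequence for $G$. Therefore $G$ is an OAT graph.

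\medskip

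I do not anticipate a serious obstacle here; the lemma is essentially bookkeeping. The one point requiring mild care is verifying in the forward direction that the comparable-vertex and clique-attachment operations genuinely act "within a single component" — that is, that the vertex $v$ to which we attach always lies in some component, and that attaching to it does not merge components — which follows immediately because the new vertices are each adjacent to $v$ (and to nothing outside $v$'s component). A second small subtlety is handling the edge cases of empty graphs in the join operation, which we sidestep by noting that build-sequences begin with single-vertex graphs, so all intermediate graphs are nonempty.
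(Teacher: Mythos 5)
Your overall strategy coincides with the paper's: the reverse direction by assembling the components with repeated disjoint unions, and the forward direction by induction (you induct on the length of the build-sequence, the paper on the number of vertices --- essentially the same thing) with a case analysis on the last operation, noting that a join forces connectivity and that the remaining operations act inside a single component. However, one step is wrong as written: in the comparable-vertex case you justify that $u$ stays in $v$'s component by asserting that ``$u$ is adjacent to $v$.'' By definition a comparable vertex is \emph{non-adjacent} to $v$; its neighbourhood is a subset $X \subseteq N(v)$, and $X$ may even be empty. When $X = \emptyset$ the new vertex $u$ is isolated, so the components of $G$ are \emph{not} the components of $H$ with one of them enlarged --- there is an additional singleton component $\{u\}$, a case your argument does not cover (though it is trivial, since a single vertex is an OAT graph). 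When $X \neq \emptyset$ your conclusion is correct but for a different reason: every neighbour of $u$ lies in $N(v)$, hence in $v$'s component $H^*$ of $H$, and the containment $N(u) \subseteq N(v)$ persists inside $G[V(H^*) \cup \{u\}]$, so that component is obtained from the OAT graph $H^*$ (OAT by the induction hypothesis applied to $H$) by a legitimate comparable-vertex addition. So the proof is salvageable with a one-line patch, but the adjacency claim is false and the isolated-vertex subcase must be added; the same misstatement recurs in your closing remark about the operations acting ``within a single component.'' (For the clique-attachment case your reasoning is fine, since every vertex of $Q$ really is joined to $v$.)
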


\begin{proof}
Suppose every component of $G$ is an OAT graph. Then the disjoint union operation can be used repeatedly to construct $G$. Now suppose $G$ is an OAT graph and that $G$ has more than one component. The proof is by induction on the number of vertices of $G$. By definition, $G$ was constructed using the operations described in the definition of OAT graphs. Then $G$ was not constructed from the join operation since $G$ is disconnected. The other operations preserve the connected components of $G$. In some construction of $G$, the disjoint union operation can be swapped with these operations.

\begin{case1}
Suppose $G$ was constructed by the disjoint union operation of $L$ and $R$. By definition, $L$ and $R$ are OAT graphs. By the induction hypothesis, every component of $L$ is an OAT graph and every component of $R$ is an OAT graph. Note that the components of $G$ are just the components of $L$ and $R$. Therefore, every component of $G$ is an OAT graph as desired.
\end{case1}

\begin{case1}
Suppose $G$ was constructed by adding a vertex $u$ comparable to a vertex $v$ of $H$. By definition $H$ is an OAT graph. By the induction hypothesis, every component of $H$ is an OAT graph. Clearly, $v$ only has neighbours in one component $H^*$ of $H$. Then $G[H^* \cup u]$ is an OAT graph since it was constructed from an OAT graph by adding a comparable vertex. Since all other components of $G$ are the components of $H$, every component of $G$ is an OAT graph.
\end{case1}

\begin{case1}
Suppose $G$ was constructed by attaching a clique $Q$ to a vertex $z$ of a graph $H$. By definition, $H$ is an OAT graph. By the induction hypothesis, each component of $H$ is an OAT graph. Suppose $z$ is a vertex of a component $H^*$ of $H$. Then $G[H^* \cup Q]$ is an OAT graph since it was constructed from an OAT graph by attaching a clique. Since all other components of $G$ are the components of $H$, every component of $G$ is an OAT graph.
\end{case1}
\end{proof}

\begin{lemma}
\label{lem:op2}
Suppose the vertices of $G$ can be partitioned into two sets $L$ and $R$ such that $L$ is joined to $R$. Then $G$ is an OAT graph if and only if $L$ is an OAT graph and $R$ is an OAT graph.
\end{lemma}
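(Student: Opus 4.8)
The plan is to follow the template of Lemma~\ref{lem:op1}: prove the easy direction directly and the hard direction by induction on $n=|V(G)|$ with a case analysis on the last operation of a build-sequence of $G$. The new difficulty, compared with Lemma~\ref{lem:op1}, is that a graph may admit many different ``join partitions'' (for instance any complete graph does), so the given partition $(L,R)$ need not coincide with the one coming from a join in the build-sequence.

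For the easy direction: if $L$ and $R$ are both OAT graphs then $G$ is their join, which is operation~2, so $G$ is OAT; and if one of $L,R$ is empty the statement is immediate. For the converse, assume $G$ is OAT and both $L$ and $R$ are nonempty (otherwise $G$ equals the nonempty part and we are done). Then $G$ is connected, so the last operation is not a disjoint union, and it remains to treat the three other operations. If $G$ is the join of $G_1$ and $G_2$, put $L_i=L\cap V(G_i)$ and $R_i=R\cap V(G_i)$; within $G_i$ the set $L_i$ is joined to $R_i$, so the induction hypothesis applied to $G_1$ and $G_2$ (each smaller than $G$) shows that each of $L_1,L_2,R_1,R_2$ is OAT, with the convention that an empty part is ignored and the other part is all of $V(G_i)$. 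Since $G$ is a join, $L$ is the join of $L_1$ and $L_2$ and $R$ is the join of $R_1$ and $R_2$, so both $L$ and $R$ are OAT. If $G$ is obtained from $H=G\setminus\{u\}$ by adding $u$ comparable to $v\in V(H)$, assume $u\in L$; then $R\subseteq N_G(u)\subseteq N_H(v)$, which forces $v\notin R$, hence $v\in L$, and also shows $L\setminus\{u\}$ is joined to $R$ in $H$. By the induction hypothesis $L\setminus\{u\}$ and $R$ are OAT, and since $N_L(u)=N_G(u)\cap L\subseteq N_G(v)\cap L=N_L(v)$, the graph $L$ is obtained from the OAT graph $L\setminus\{u\}$ by adding the comparable vertex $u$, so $L$ is OAT.

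The main obstacle is the clique-attachment operation, where $G$ is obtained from an OAT graph $H$ by attaching a complete graph $Q$ to a vertex $v$ of $H$. I would first dispose of the case that $V(Q)$ meets both parts: choosing $q\in L\cap V(Q)$ and $q'\in R\cap V(Q)$ and using $R\subseteq N_G(q)\subseteq V(Q)\cup\{v\}$ and $L\subseteq N_G(q')\subseteq V(Q)\cup\{v\}$ gives $V(H)=\{v\}$, so $G$ is a complete graph and $L,R$ are cliques, hence OAT. Otherwise $V(Q)$ lies in one part, say $V(Q)\subseteq L$. If $v\in L$ as well, then $V(H)=(L\cap V(H))\sqcup R$ with $L\cap V(H)$ joined to $R$ in $H$; the induction hypothesis makes $L\cap V(H)$ and $R$ OAT, and $L$ is recovered by attaching $Q$ to $v\in L\cap V(H)$. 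If instead $v\in R$, then each $q\in V(Q)$ being joined to $R$ forces $R=\{v\}$, a single vertex and thus OAT; moreover $v$ is joined to $L\cap V(H)=V(H)\setminus\{v\}$ in $H$, so $H$ is the join of $\{v\}$ and $H\setminus v$, whence the induction hypothesis gives that $H\setminus v$ is OAT; and since in $G$ no vertex of $Q$ has a neighbour outside $V(Q)\cup\{v\}$, the graph $L=V(G)\setminus\{v\}$ is the disjoint union of the clique $Q$ and $H\setminus v$, hence OAT. The only delicate point I anticipate is keeping track of these subcases together with the degenerate situations in which a part, or $H\setminus v$, is empty; the rest is routine.
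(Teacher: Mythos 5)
Your proof is correct and follows essentially the same route as the paper's: the easy direction via the join operation, and the converse by induction on $n$ with a case analysis on the last operation of a build-sequence of $G$, with matching arguments in each of the four cases. You are in fact slightly more careful than the paper in the clique-attachment case (explicitly disposing of the situation where $V(Q)$ meets both parts, which can only occur when $G$ is complete), and your extra subcase $V(Q)\subseteq L$ with $v\in L$ is actually vacuous, since together with $Q$ nonempty it would force $R=\emptyset$.
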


\begin{proof}
Suppose $L$ and $R$ are both OAT graphs. Then $G$ is an OAT graph since $G$ can be constructed from the join of $L$ and $R$. Now suppose that $G$ is an OAT graph. The proof is by induction on the number of vertices of $G$. By definition, $G$ was constructed by the operations described in the definition of OAT graphs. Note that $G$ was not constructed by the disjoint union operation since $G$ is connected. 

\begin{case2}
Suppose $G$ is constructed from the join of $L^*$ and $R^*$. Assume that $L \neq L^*$ (and $R \neq R^*$) and $L \neq R^*$ (and $R \neq L^*$) since otherwise we are done. By definition $L^*$ and $R^*$ are OAT graphs. Then $L$ can be partitioned into two sets $L_1 = L^* \cap L$ and $L_2 = R^* \cap L$. Similarly, $R$ can be partitioned into two sets $R_1 = L^* \cap R$ and $R_2 = R^* \cap R$ (see Figure \ref{fig:joins}). Since $L^*$ is joined to $R^*$, $L_1$ is joined to $L_2$ and $R_1$ is joined to $R_2$. Since $L^* = L_1 \cup R_1$ is an OAT graph and $L_1$ is joined to $R_1$, by the induction hypothesis, $L_1$ and $R_1$ are OAT graphs. Similarly, since $R^* = L_2 \cup R_2$ is an OAT graph and $L_2$ is joined to $R_2$, by the induction hypothesis, $L_2$ and $R_2$ are OAT graphs. Since $L$ is the join of $L_1$ and $L_2$, $L$ is an OAT graph. Similarly, since $R$ is the join of $R_1$ and $R_2$, $R$ is an OAT graph.
\end{case2}

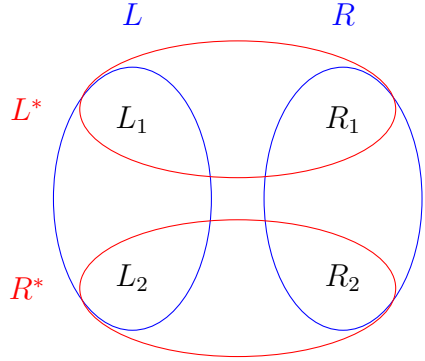
\begin{figure}
\begin{center}
\begin{tikzpicture}[scale=0.7]
\tikzstyle{vertex}=[circle, draw, fill=black, inner sep=0pt, minimum size=5pt]

	\draw[blue] (6, 0) ellipse (1.5cm and 2.5cm);
	\draw[blue] (10, 0) ellipse (1.5cm and 2.5cm);
	\draw[red] (8, 1.7) ellipse (3cm and 1.3cm);
	\draw[red] (8, -1.7) ellipse (3cm and 1.3cm);
	\node[blue] at (6,3.5) {$L$};
	\node[blue] at (10,3.5) {$R$};
	\node[red] at (4,1.7) {$L^*$};
	\node[red] at (4,-1.7) {$R^*$};
	\node at (6,1.5) {$L_1$};
	\node at (6,-1.5) {$L_2$};
	\node at (10,1.5) {$R_1$};
	\node at (10,-1.5) {$R_2$};
	
	
	
\end{tikzpicture}
\end{center}
\caption{Two partitions of $G$ into sets that are joined to each other.}
\label{fig:joins}
\end{figure}

\begin{case2}
Suppose $G$ was constructed by adding a vertex $u$ comparable to a vertex $v$ of $H$. Then $u$ and $v$ must both be in $L$ or both be in $R$ since $u$ and $v$ are non-adjacent. Without loss of generality, suppose $u, v \in L$. By definition, $H$ is an OAT graph. Note that $H$ is the join of $L \setminus u$ and $R$. Then by the induction hypothesis $L \setminus u$ and $R$ are OAT graphs. Then $L$ is an OAT graph since it can be constructed from $L \setminus u$ by adding the vertex $u$ comparable to $v$ and the appropriate edges. 
\end{case2}

\begin{case2}
Suppose $G$ was constructed by attaching a clique $Q$ to a vertex $z$ of a graph $H$. Since there are no edges between $Q$ and $H \setminus z$ in $G$, either the vertices of both $Q$ and $H \setminus z$ are in $R$ or are in $L$. Without loss of generality, suppose the vertices of $Q$ and $H \setminus z$ are in $L$. Then since $R \neq \emptyset$, $z$ is the only vertex in $R$. Then clearly $R$ is an OAT graph. Note that $L$ is composed of the graphs $H \setminus z$ and $Q$. By definition, $H$ is an OAT graph. Note that since $z \in R$, $H$ is the join of $H \setminus z$ and $z$. Then by the induction hypothesis, $H \setminus z$ is an OAT graph. Then $L$ is an OAT graph since $L$ can be constructed by the disjoint union of $H \setminus z$ and $Q$.
\end{case2}
\end{proof}

\begin{lemma}
\label{lem:op3}
If $G$ has a vertex $u$ comparable to some vertex $v$ of $G$, then $G$ is an OAT graph if and only if $G \setminus u$ is an OAT graph.
\end{lemma}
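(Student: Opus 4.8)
The ``if'' direction is immediate: if $G\setminus u$ is an OAT graph and $u$ is comparable to $v$ in $G$ (so $uv\notin E$ and $N_G(u)\subseteq N_G(v)$), then $G$ is obtained from $G\setminus u$ by operation~3, adding $u$ comparable to $v$ with $X=N_G(u)\subseteq N_{G\setminus u}(v)$; hence $G$ is an OAT graph. The substance is the ``only if'' direction, which I would prove by induction on $|V(G)|$, splitting on the last operation in a build-sequence of $G$. The recurring idea is uniform: either $u$ is exactly the vertex introduced by the last operation, in which case $G\setminus u$ is the immediately preceding OAT graph (or a disjoint union / complete graph that is trivially OAT); or one locates a strictly smaller OAT graph $G'$ appearing in the construction with $u\in V(G')$ and $u$ still comparable to some vertex of $G'$, applies induction to conclude $G'\setminus u$ is an OAT graph, and then re-applies the same last operation to $G'\setminus u$ to obtain $G\setminus u$. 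Throughout, degenerate cases in which a re-applied operation would act on the empty graph are disposed of by observing that $G\setminus u$ then equals an OAT graph directly.

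For the disjoint union $G$ of $L$ and $R$, the vertex $u$ lies in one part, say $L$; if $v\in L$ then comparability of $u$ and $v$ is inherited in $L$, and if $v\in R$ then $N_G(u)=\emptyset$, so $u$ is comparable to every other vertex of $L$ (the case $|V(L)|=1$ being handled directly). For the join $G$ of $L$ and $R$, non-adjacency forces $u,v$ into the same part, where comparability is recovered after intersecting neighbourhoods with that part. For operation~3, say $G=H+w$ with $w$ comparable to $x$ in $H$: if $u=w$ then $G\setminus u=H$; otherwise $u\in V(H)$, and using that $w$ is non-adjacent to both $u$ and $x$ one checks that $u$ is comparable in $H$ to $v$ (when $v\ne w$) or to $x$ (when $v=w$), and that after applying induction to $H$ the re-added vertex $w$ is still comparable to an available vertex of $G\setminus u$.

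The most delicate case is operation~4, where $G$ is obtained by attaching a clique $Q$ to a vertex $z$ of $H$. If $u\in V(Q)$, then $N_G(u)\cup\{u\}=V(Q)\cup\{z\}$ is a clique, which forces $v\notin V(Q)\cup\{z\}$ and then forces $|V(Q)|=1$, so that $G\setminus u=H$. If $u\in V(H)\setminus\{z\}$ and $v\in V(H)$, comparability is inherited by $H$ and induction applies. The remaining possibilities ($v\in V(Q)$, or $u=z$) force $N_H(u)\subseteq\{z\}$; here one either switches to a different comparable partner for $u$ inside $H$ (a neighbour of $z$, when $z$ has one), or observes that the component of $G\setminus u$ containing $z$ has become a complete graph, while the other components are unchanged components of $H$ and hence OAT by Lemma~\ref{lem:op1}.

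I expect the main obstacle to be exactly this operation~4 together with its degenerate configurations ($u$ pendant or isolated, $Q$ or a part of size one, or $u$ being the attachment vertex $z$), since these do not cleanly inherit comparability and must be dispatched case by case; a secondary subtlety is the interaction of the two comparable pairs in operation~3 (the sub-cases $u=x$ and $v=w$), where one must re-verify comparability of the reintroduced vertex after $u$ is deleted. All of this is routine once the bookkeeping on neighbourhoods is carried out carefully.
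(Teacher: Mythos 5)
Your proposal is correct and follows essentially the same route as the paper's proof: induction on the number of vertices with a case analysis on the last operation of a build-sequence, either identifying $u$ with the newly introduced vertex or transferring the comparable pair (possibly after switching to a new comparable partner in degenerate configurations) into the smaller graph and re-applying the operation. If anything, you treat the attach-a-clique degeneracies ($N_H(u)\subseteq\{z\}$, resolved via Lemma~\ref{lem:op1}) more explicitly than the paper does; the only slips are cosmetic --- the case $u=z$ is in fact impossible (a partner $v$ would have to be adjacent to all of $Q$), and the added vertex $w$ need not be non-adjacent to $u$ except in the sub-case $v=w$, which is the only place you use that fact.
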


\begin{proof}
Suppose $G \setminus u$ is an OAT graph. Then $G$ is an OAT graph since $G$ can be constructed from $G \setminus u$ by adding back the comparable vertex $u$ and all appropriate edges. Now suppose $G$ is an OAT graph. The proof is by induction on the number of vertices of $G$. By definition, $G$ was constructed by using the operations described in the definition of OAT graphs.

\begin{case3}
Suppose $G$ was constructed from the disjoint union of the OAT graphs $L$ and $R$. Without loss of generality, suppose $u \in L$. Assume that $L$ has another vertex, since if not $G \setminus u$ is OAT by definition. If $u$ is an isolated vertex, i.e. $N_G(u) = \emptyset$, then $u$ is comparable to every other vertex of $L$. Otherwise, both $u$ and $v$ must be in $L$ since there is a path connecting them through $N(u) \cap N(v)$. In either case, $u$ is comparable to some vertex of $L$. By the induction hypothesis, $L \setminus u$ is an OAT graph. Then $G \setminus u$ is an OAT graph since $G$ can be constructed from the disjoint union of $L \setminus u$ and $R$.
\end{case3}

\begin{case3}
Suppose $G$ was constructed from the join of the OAT graphs $L$ and $R$. Then both $u$ and $v$ are vertices of $L$ or both $u$ and $v$ are vertices of $R$, since there are all possible edges between $L$ and $R$. Without loss of generality, suppose $u$ and $v$ are vertices of $L$. Since $u$ and $v$ are comparable in $G$, then removing the same set of vertices from $N(u)$ and $N(v)$ will not change that $N(u) \subseteq N(v)$, and so $u$ is comparable to $v$ in $L$. By the induction hypothesis, $L\setminus u$ is an OAT graph. Then $G \setminus u$ is an OAT graph since it can be constructed from the join of $L\setminus u$ and $R$.
\end{case3}

\begin{case3}
Suppose $G$ was constructed by adding a vertex $x$ comparable to another vertex $y$ of an OAT graph $H$. Assume $u \neq x$ since otherwise $G \setminus u$ is an OAT graph by definition. Then since $G \setminus x$ is an OAT graph, by the induction hypothesis, $G \setminus \{x, u\}$ is an OAT graph. Next we show that $x$ is comparable to some vertex of $G \setminus \{x, u\}$. If $u \neq y$, then $y$ is a vertex of $G \setminus \{x, u\}$ and therefore $x$ is comparable to some vertex of $G \setminus \{x, u\}$. If $u = y$, then $v$ must be a vertex of $G \setminus \{x, u\}$ and $N(x) \subseteq N(y) =  N(u) \subseteq N(v)$. Furthermore, $x$ is non-adjacent to $v$ in $G$ since $u$ is non-adjacent to $v$ and $N(x) \subseteq N(u)$. In any case, $x$ is comparable to some vertex of $G \setminus \{x, u\}$ and we can construct $G \setminus u$ from $G \setminus \{x, u\}$ by adding back $x$ and the appropriate edges. Therefore, $G \setminus u$ is an OAT graph.
\end{case3}

\begin{case3}
Suppose $G$ was constructed by attaching a complete graph $Q$ to a vertex $z$ of an OAT graph $H$. Suppose that $H$ has a vertex other than $z$, since otherwise $G$ is a clique and clearly $G \setminus u$ is an OAT graph. Note that $u \neq z$ since $z$ is adjacent to every vertex of $Q$ and no vertex of $H \setminus z$ is adjacent to a vertex of $Q$. Suppose $u$ is a vertex of $Q$. Note that $u$ is not comparable to $z$ or any other vertex in $Q$ since $u$ is adjacent to $z$ and all other vertices in $Q$. Then $u$ is the only vertex of $Q$ since no vertex of $H$ is adjacent to a vertex of $Q$. Then $G \setminus u = H$ and therefore $G \setminus u$ is an OAT graph. Now assume $u \in H \setminus z$. By the induction hypothesis, $H \setminus u$ is an OAT graph. Then $G \setminus u$ is an OAT graph since $G$ can be constructed by attaching $Q$ to the vertex $z$ of $H \setminus u$.
\end{case3}
\end{proof}

\begin{lemma}
\label{lem:op4}
Suppose $G$ has a cut vertex $z$ such that $G \setminus z$ has a component $Q$ that is a complete graph and such that $z$ is joined to $Q$. Then $G$ is an OAT graph if and only if $G \setminus Q$ is an OAT graph.
\end{lemma}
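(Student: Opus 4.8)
The plan is to mirror the structure of the preceding three lemmas: the easy direction is immediate from the definition of OAT graphs (attaching the clique $Q$ to the vertex $z$ of the OAT graph $G \setminus Q$ yields $G$), so the work is in the forward direction. Assume $G$ is an OAT graph and induct on the number of vertices. Since $z$ is a cut vertex, $G$ is connected, so $G$ was not built by a disjoint union at the top of its build-sequence; thus the last operation is a join, adding a comparable vertex, or attaching a clique, and I will handle these three cases. Throughout, I will use the structural hypothesis heavily: $Q$ is a component of $G \setminus z$, it is a complete graph, $z$ is joined to $Q$, and (being a component of $G\setminus z$) no vertex of $Q$ has a neighbour in $G$ outside $Q \cup \{z\}$.

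For the \textbf{join case}, suppose $G$ is the join of OAT graphs $L^*$ and $R^*$. Since every vertex of $R^*$ is adjacent to every vertex of $L^*$, and no vertex of $Q$ is adjacent to anything outside $Q\cup\{z\}$, one of $L^*,R^*$ must be contained in $Q \cup \{z\}$; say $R^* \subseteq Q \cup \{z\}$. If $R^* = Q$ then $z \in L^*$ and $G\setminus Q = L^*$ is an OAT graph and we are done; otherwise $z \in R^*$, and $R^* \setminus z \subseteq Q$ is a (possibly smaller) clique with $z$ joined to it, so by induction applied to the OAT graph $R^*$ with cut-vertex $z$ — or more simply, since $R^*$ is a clique and hence OAT, and $G\setminus Q$ is then the join of $L^*$ with the clique $R^*\setminus Q$ (or with the single vertex $z$), which is OAT. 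For the \textbf{comparable-vertex case}, $G$ is obtained from an OAT graph $H$ by adding $u$ comparable to $v$. Note $u \notin Q$: if $u\in Q$ then since $Q$ is a clique with $|Q|\ge 1$ and $z$ is adjacent to every vertex of $Q$, either $Q$ has another vertex $q$, forcing $q\in N(u)\subseteq N(v)$ and hence $v \in Q\cup\{z\}$ contradicting that $v$ is non-adjacent to $u$ while $z$ and all of $Q$ are adjacent to $u$; or $Q=\{u\}$, in which case $G\setminus Q = G\setminus u = H$ is OAT. So $u\notin Q$; then $z$ is still a cut vertex of $H = G\setminus u$ with $Q$ a complete component of $H\setminus z$ joined to $z$ (deleting $u$ cannot merge components or change the clique $Q$ or the fact that $z$ is joined to $Q$, since $u$'s only possible neighbour in $Q\cup\{z\}$ is $z$, but even then $Q$ stays a component of $H\setminus z$), so by induction $H\setminus Q$ is OAT, and $G\setminus Q$ is obtained from $H\setminus Q$ by adding $u$ comparable to $v$ — valid since both $u$ and $v$ survive in $G\setminus Q$ (they are not in $Q$) and $N(u)\subseteq N(v)$ is preserved under deleting the common set $Q$.

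For the \textbf{clique-attachment case}, $G$ is obtained from an OAT graph $H$ by attaching a complete graph $Q'$ to a vertex $w$ of $H$. Here I would compare $Q'$ with $Q$. If the attachment point $w$ equals $z$ and $Q' = Q$, we are done immediately since $G\setminus Q = H$. Otherwise I argue that $Q'$, being a clique-component of $G\setminus w$ joined to $w$, interacts cleanly with $Q$: since no vertex of $Q$ sees anything outside $Q\cup\{z\}$ and no vertex of $Q'$ sees anything outside $Q'\cup\{w\}$, either $Q\cap Q'=\emptyset$ and $\{z\}\cap(Q'\cup\{w\})$, $\{w\}\cap(Q\cup\{z\})$ are controlled, or one of $Q,Q'$ is a subset of the other's closed neighbourhood. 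The cases to dispatch: (i) $Q' \subseteq Q\cup\{z\}$ with $w\in Q\cup\{z\}$; (ii) $Q\subseteq Q'\cup\{w\}$ with $z\in Q'\cup\{w\}$; (iii) $Q$ and $Q'$ are vertex-disjoint and ``far apart.'' In case (iii), $Q$ is still a complete component of $(H)\setminus z$ joined to $z$ with $z\in V(H)$, so by induction $H\setminus Q$ is OAT, and $G\setminus Q$ is $H\setminus Q$ with $Q'$ attached to $w$, hence OAT. In cases (i) and (ii) the two cliques (and the points $z,w$) essentially form one bigger clique-like gadget; after possibly using that any complete graph is OAT and reassociating the attachment, $G\setminus Q$ is seen to be built from an OAT subgraph by one clique-attachment. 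The \textbf{main obstacle} I anticipate is exactly this last case: carefully enumerating how the ``new'' clique $Q'$ and its attachment vertex $w$ can overlap the ``target'' clique $Q$ and its attachment vertex $z$, and checking in each overlap pattern that after deleting $Q$ the remaining graph is still obtainable by a single clique-attachment (or a comparable-vertex addition) to an OAT graph — this requires using the clique structure of $Q$ and the fact that $z$ is joined to all of $Q$ to rule out the pathological configurations, and a small amount of care when $|Q|=1$ so that ``attaching a clique'' degenerates and a vertex of $Q$ could a priori be a comparable vertex instead.
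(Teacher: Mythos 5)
Your overall strategy is the same as the paper's: the backward direction is immediate from the definition, and for the forward direction you induct on $n$ and branch on the last operation of a build-sequence. But several of your steps have genuine gaps. First, under the paper's definition a cut vertex does not imply $G$ is connected, so the disjoint-union case cannot be skipped (it is easy: $Q$ and $z$ lie in one side, recurse there). More seriously, the join case is mishandled. You correctly note that one side, say $R^*$, is contained in $Q\cup\{z\}$, but if $R^*$ contained a vertex $q\in Q$ then every vertex of $H\setminus z\subseteq L^*$ (nonempty, since the component of $z$ splits into at least two parts of $G\setminus z$) would be adjacent to $q$, contradicting that $Q$ is a component of $G\setminus z$. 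So in the only realizable configuration $R^*=\{z\}$ and $L^*=G\setminus z$ \emph{contains all of} $Q$; hence $G\setminus Q$ is the join of $L^*\setminus Q$ with $\{z\}$, not of $L^*$ with $R^*\setminus Q$ as you claim, and the real work of this case --- showing $L^*\setminus Q$ is OAT, which follows from Lemma~\ref{lem:op1} because $Q$ is an entire component of $L^*$ (the induction hypothesis of the present lemma does not apply to $L^*$, since $Q$ is not hanging off a cut vertex there) --- is missing from your argument.

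Two further gaps. In the comparable-vertex case you assert that both $u$ and $v$ survive in $G\setminus Q$, but you only proved $u\notin Q$; the case $v\in Q$ can occur, and then $N(u)\subseteq N(v)\subseteq Q\cup\{z\}$ together with $u\notin Q$ forces $N(u)\subseteq\{z\}$, so one must instead build $G\setminus Q$ from $H\setminus Q$ by adding $u$ as an isolated vertex or by attaching the one-vertex clique $\{u\}$ to $z$ --- exactly the subcase the paper treats and you omit (you also never rule out $u=z$, though the same neighbourhood argument does). Finally, the clique-attachment case, which you yourself flag as the main obstacle, is left as a sketch; the missing idea is that the cut-vertex hypothesis forbids partial overlap: if $q\in Q\cap Q'$ then $N(q)=(Q\setminus\{q\})\cup\{z\}=(Q'\setminus\{q\})\cup\{z'\}$, so either $z=z'$ and $Q=Q'$ (then $G\setminus Q=H'$ and we are done), or $z\in Q'$, which gives $N(z)\subseteq Q\cup\{z\}$ and contradicts the fact that $z$ has a neighbour in a second component of $G\setminus z$. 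Once $Q$ and $Q'$ are disjoint, one applies induction to $H'=G\setminus Q'$ and re-attaches $Q'$ at $z'$, which is the paper's Case 4. So the skeleton is right, but the join, comparable-vertex, and clique-attachment cases each need a concrete repair before this is a proof.
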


\begin{proof}
Let $H = G \setminus Q$ and suppose $H$ is an OAT graph. Then $G$ is an OAT graph since $G$ can be constructed from $H$ by attaching a clique $Q$ to the vertex $z$. Now suppose $G$ is an OAT graph. The proof is by induction on the number of vertices of $G$. By definition, $G$ can be constructed by using the operations described in the definition of OAT graphs.

\begin{case4}
Suppose $G$ was constructed from the disjoint union of the OAT graphs $L$ and $R$. Without loss of generality, suppose $z \in L$. Then every vertex of $Q$ is in $L$ since $z$ is adjacent to every vertex of $Q$. By the induction hypothesis, $L \setminus Q$ is an OAT graph. Then $G \setminus Q$ is an OAT graph since $G$ can be constructed from the disjoint union of $L \setminus Q$ and $R$.
\end{case4}

\begin{case4}
Suppose $G$ was constructed from the join of the OAT graphs $L$ and $R$. Then all of the vertices of $Q$ and $H$ are in $R$ or in $L$ since there are no edges between $Q$ and $H$. Without loss of generality, suppose all of the vertices of $Q$ and $H$ are in $L$. Then since $R$ is not empty, it must be the case that $z \in R$. By the induction hypothesis, $L \setminus Q$ is an OAT graph. Then $G$ is an OAT graph since $G$ can be constructed by the join of $L \setminus Q$ and $R$.
\end{case4}

\begin{case4}
Suppose $G$ was constructed by adding a vertex $x$ comparable to another vertex $y$ of an OAT graph $H$. Note that $z \neq x$ since $z$ is adjacent to every vertex of $Q$ and no other vertex of $G \setminus Q$ is adjacent to a vertex of $Q$. If $x \in Q$ then $x$ is the only vertex of $Q$ since $x$ is adjacent to every other vertex of $Q$ and $z$ and no other vertex of $G \setminus Q$ is adjacent to a vertex in $Q$. Then $H = G \setminus Q$ and therefore $G \setminus Q$ is an OAT graph. Now assume $x \in G \setminus Q$ and $x \neq z$. Let $G' = G \setminus x$. By the induction hypothesis, $G' \setminus Q$ is an OAT graph. If $y \in G \setminus Q$, then $G \setminus Q$ is an OAT graph since $G \setminus Q$ can be constructed from $G' \setminus Q$ by adding back the vertex $x$ comparable to $y$. If $y \in Q$, then either $x$ is an isolated vertex or $x$ is only adjacent to $z$ in $G$. If $x$ is an isolated vertex, then $G \setminus Q$ is an OAT graph since it can be constructed from the disjoint union of $x$ and $G' \setminus Q$. If $x$ is only adjacent to $z$ in $G$, then $G \setminus Q$ is an OAT graph since it can be constructed by attaching the clique $x$ to the vertex $z$ of $G' \setminus Q$.
\end{case4}

\begin{case4}
Suppose $G$ was constructed by attaching a complete graph $Q'$ to a vertex $z'$ of an OAT graph $H'$. Assume that $Q \neq Q'$ (and therefore disjoint) since if not, we are done. Then by the induction hypothesis, $H' \setminus Q$ is an OAT graph. Then $G \setminus Q$ is an OAT graph since it can be constructed from $H \setminus Q$ by attaching the clique $Q'$ at the vertex $z'$.
\end{case4}

\end{proof}

\subsection{Recognition Algorithm}

In this section we give an algorithm to recognize OAT graphs in $O(n^3)$ time. The algorithm takes a graph $G$ as input and either outputs a \emph{build-tree} certifying that $G$ is an OAT graph or outputs the answer no. A \emph{build-tree} is a tree whose root node is $G$, whose leaf nodes are the single vertices of $G$, and whose internal nodes represent the operations used to build $G$. The internal nodes have exactly two children which are the input graphs for the corresponding operation. We note that a build-tree of $G$ is not unique, and the algorithm only returns one of possibly many build-trees. For example, take a graph constructed by attaching a complete graph to a vertex of another complete graph. This graph could also be constructed by joining a single vertex to the disjoint union of two complete graphs.

Let $G$ be the input graph. The idea of the algorithm is to check if any of the four defining operations can be used to construct $G$ from smaller graphs. If so, then we recursively test those smaller graphs. Lemma \ref{lem:op1}, \ref{lem:op2}, \ref{lem:op3}, and \ref{lem:op4} justify this approach. In particular, it does not matter in which order we check the four operations. Here are further details that (arbitrarily) use the order of operations from the definition.

The algorithm first checks if $G$ is connected. If not, the algorithm recursively checks each connected component of $G$. If each component is an OAT graph, then $G$ is an OAT graph and can be constructed by the disjoint union of its components and the build-tree is updated accordingly. If $G$ is connected, then the algorithm checks if the complement of $G$ is connected. If not, the algorithm recursively checks the complement of the connected components of the complement of $G$. If each of these graphs are OAT graphs, then $G$ is an OAT graph and can be constructed by the join of these graphs, and the build-tree is updated accordingly. Next if $G$ is connected and the complement of $G$ is connected, the algorithm examines if $G$ contains a vertex $u$ comparable to a vertex $v$. If so, then the algorithm recursively checks if $G \setminus u$ is an OAT graph. If so, then $G$ is an OAT graph and can be constructed from $G \setminus u$ by adding the comparable vertex $u$ and the build-tree of $G$ is updated accordingly. Finally, the algorithm checks if $G$ has any cut vertices $z$ such that $G \setminus z$ has a component $Q$ that is a complete graph and such that $z$ is adjacent to all vertices in $Q$. If so, the algorithm recursively checks if $G \setminus Q$ is an OAT graph. If so, then $G$ is an OAT graph and can be constructed by attaching $Q$ to $z$ in $G \setminus Q$ and the build-tree is updated accordingly. The correctness of the algorithm follows from Lemma \ref{lem:op1}, Lemma \ref{lem:op2},  Lemma \ref{lem:op3}, and Lemma \ref{lem:op4}. 

Next we give implementation details and analyze the run-time of the algorithm. Let $G$ be the input graph with $n$ vertices and $m$ edges. In each recursive step of the algorithm, there are four conditions that may need to be checked.

\begin{case6}
First determine the connected components of $G$. This can be done using depth-first-search in $O(n + m)$ time.
\end{case6}

\begin{case6}
Next determine the connected components of the complement of $G$. This can be done in $O(n^2)$ time.
\end{case6}

\begin{case6}
Next, search for the biconnected components and the cut vertices of $G$. This can be done in $O(n+m)$ time using depth-first-search \cite{tarjan1972}. For each of the biconnected components $Q$ of $G$, check if the vertices form a clique. This can be done in $O(n_Q^2)$ time by checking if $Q$ has $\binom{n_Q}{2}$ edges. Therefore, this entire step can be done in $O(n^2)$ time.
\end{case6}

\begin{case6}
Finally search for a vertex $u$ comparable to another vertex $v$ of $G$. A brute force approach for finding such a pair would take $O(n^3)$ time at each recursive step. Instead, we use the adjacency matrix of $G$ and maintain it recursively to find a pair of comparable vertices at each step. Let $A(G)$ (or simply $A$) be the adjacency matrix of $G$ where the rows and columns of $A$ are indexed by the vertices of $G$. For $u, v \in V(G)$, the entry of $A$ at row $u$ and column $v$, denoted by $A[u,v]$, is 1 if $u$ and $v$ are adjacent and 0 otherwise. It is well known that $A^r[x,y]$ gives the number of paths of length at most $r$ from $x$ to $y$ in $G$. Then it is easy to see that for non-adjacent and distinct vertices $x,y \in V(G)$, $N(x) \subseteq N(y)$ if and only if $A^2[x,y] = d(x)$ (the degree of $x$) \cite{spinrad2003}. We note that $A^2$ can be computed in $O(n^\omega)$ time where the current best known value of $\omega$ is about 2.376 \cite{coppersmith1990}. However, to achieve the $O(n^3)$ running time of our algorithm, we only require that $A^2$ be computed in $O(n^3)$ time since, as we next show, $A^2$ can be updated in $O(n^2)$ time in each recursive step of the algorithm. Then at each step of the algorithm, we can find a pair of comparable vertices (if they exist) in $O(n^2)$ time by scanning the entries of $A^2$.

\begin{lemma}
Let $G$ be an OAT graph and let $A^2(G)$ be given. Then for each subgraph $H$ of $G$ considered in the recursive steps of the algorithm, $A^2(H)$ can be computed in $O(n^2)$ time.
\end{lemma}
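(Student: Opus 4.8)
The plan is to case on which of the four defining operations is applied at the recursive step, and for each case give an $O(n^2)$-time rule that produces $A^2(H)$ from $A^2(G)$, where $G$ is the graph at that step and $H$ one of the smaller graphs it is broken into (the matrices of all the parents are available by induction on the recursion). The one fact I use throughout is that for distinct vertices $x,y$ of $G$ one has $A^2(G)[x,y]=|N_G(x)\cap N_G(y)|$, while $A^2(G)[x,x]=d(x)$; hence if $H$ is the subgraph induced on $W\subseteq V(G)$, then $A^2(H)[x,y]=|N_G(x)\cap N_G(y)\cap W|$ and $A^2(H)[x,x]=|N_G(x)\cap W|$. Copying out the $n_H\times n_H$ submatrix of $A^2(G)$ indexed by $W$ costs $O(n_H^2)\le O(n^2)$, so in each case it remains only to account for the vertices of $V(G)\setminus W$.

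First I treat the two operations that split $G$. If $G$ is disconnected and $H$ is a component, then $N_G(x),N_G(y)\subseteq W$ for every $x,y\in W$, so $A^2(H)$ is exactly the submatrix of $A^2(G)$ indexed by $W$; since the components are vertex-disjoint the total cost over all of them is $\sum_i O(n_{H_i}^2)=O(n^2)$, using $\sum_i n_{H_i}\le n$. If $\overline G$ is disconnected with components on vertex sets $W_1,\dots,W_t$ and $H=G[W_i]$, then each $x\in W_i$ satisfies $N_G(x)=N_H(x)\cup\bigcup_{j\ne i}W_j$, so every entry of $A^2(G)$ restricted to $W_i$, diagonal or not, exceeds the corresponding entry of $A^2(H)$ by exactly $\sum_{j\ne i}|W_j|\le n$; subtracting this single value over the submatrix is $O(n_H^2)$, again $O(n^2)$ over all pieces. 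Next the two operations that delete vertices. If $H=G\setminus u$ with $u$ comparable to some vertex of $G$, then $u\in N_G(x)\cap N_G(y)$ exactly when $u$ is adjacent to both $x$ and $y$, so $A^2(H)$ is obtained from $A^2(G)$ by subtracting $1$ from the entry $(x,y)$ for every pair $x,y\in N_G(u)$, subtracting $1$ from $(x,x)$ for every $x\in N_G(u)$, and then deleting row and column $u$; as $|N_G(u)|^2\le n^2$, this is $O(n^2)$. Finally, if $H=G\setminus Q$ where $Q$ is a complete component of $G\setminus z$ joined to the cut vertex $z$, then $z$ is the only vertex of $H$ having a neighbour in $Q$, so $N_G(x)\cap N_G(y)\cap Q=\emptyset$ for all distinct $x,y\in V(H)$; hence every off-diagonal entry of $A^2(H)$ equals that of $A^2(G)$, the only changed diagonal entry is $A^2(H)[z,z]=d(z)-n_Q$, and forming the submatrix and correcting that one entry costs $O(n_H^2)$.

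I expect the comparable-vertex case to be the only one demanding a second look, since its correction is neither a restriction nor a uniform shift: the key observation is that the altered entries are confined to index pairs drawn from $N_G(u)$, of which there are at most $d(u)^2=O(n^2)$, so one $O(n^2)$ sweep suffices. The remaining work — checking the diagonal formulas and confirming that, for the two splitting operations, vertex-disjointness of the pieces keeps the total cost at $O(n^2)$ — is routine, and I do not foresee a genuine obstacle.
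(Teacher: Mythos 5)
Your proposal is correct and follows essentially the same route as the paper: the same four case-by-case update rules (plain submatrix for components, uniform subtraction of the other side's size for the join, subtracting $1$ on all pairs from $N(u)$ for a comparable vertex, and a single diagonal correction $A^2[z,z]-n_Q$ for a pendant clique), each in $O(n^2)$ time. The only difference is that you spell out the justification via $A^2[x,y]=|N(x)\cap N(y)|$ and the per-case cost accounting, which the paper leaves implicit.
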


\begin{proof}
We show how to update $A^2(G)$ in the recursive steps for each of the four operations.
\begin{case5}
Suppose $G$ is the disjoint union of the graphs $L$ and $R$. Then $A^2(L)$ is simply the submatrix of $A^2(G)$ with rows and columns corresponding to the vertices of $L$, and similarly for $A^2(R)$. 
\end{case5}

\begin{case5}
Suppose $G$ was constructed from the join of the graphs $L$ and $R$. Then $A^2(L)$ is the matrix obtained from $A^2(G)$ by deleting the rows and columns corresponding to the vertices of $R$ and subtracting $n_R$ from every entry. Similarly, $A^2(R)$ is the matrix obtained from $A^2(G)$ by deleting the rows and columns corresponding to the vertices of $L$ and subtracting $n_L$ from every entry.
\end{case5}

\begin{case5}
Suppose $G$ was constructed by adding a vertex $u$ comparable to a vertex of the OAT graph $H$. Then $A^2(H)$ can be obtained from $A^2(G)$ by deleting row $u$ and column $u$ and for each pair of vertices $x,y \in N(u)$ (not necessarily distinct), subtracting 1 from $A^2[x,y]$.
\end{case5}

\begin{case5}
Suppose $G$ was constructed from attaching the complete graph $Q$ to the vertex $v$ of the OAT graph $H$. Then $A^2(H)$ can be obtained from $A^2(G)$ by deleting the rows and columns corresponding to the vertices of $Q$ and be subtracting $n_Q$ from the entry $A^2[v,v]$. 
\end{case5}
Clearly, for each of these cases, the desired matrix can be obtained from $A^2(G)$ in $O(n^2)$ time.
\end{proof}
\end{case6}

The algorithm may be required to check these conditions $O(n)$ times and therefore the total run-time of the algorithm is $O(n^3)$. We note that changing the order in which the algorithm checks for each of the four conditions may improve the run-time.

\section{Conclusion}
In this paper, we introduced a class of graphs called OAT graphs defined by four simple operations. This class of graphs includes chordal bipartite graphs, compact graphs, $P_4$-sparse graphs, and some graphs which are not perfect but have equal chromatic number and clique number. We showed that for any $k$-colourable OAT graph $G$, the reconfiguration graph $\mathcal{R}_{k+1}(G)$ is connected with diameter $O(n^2)$. The proof of this can be converted into an algorithm that exhibits a recolouring sequence between any two $(k+1)$-colourings of $G$ in polynomial time. We also give an algorithm to recognize OAT graphs in $O(n^3)$ time and if the input graph $G$ is an OAT graph, the algorithm will return a build-tree of $G$. We end by stating two open problems.

\begin{problem}
Is it possible to include the operations of adding a true twin and adding a simplicial vertex to the operations defining OAT graphs and still have $\mathcal{R}_{k+1}(G)$ connected and of diameter $O(n^2)$?
\end{problem}
The class of graphs built from these six operations would then include distance-hereditary graphs and chordal graphs. Are there other simple operations that would allow for the inclusion of these graph classes?

\begin{problem}
Given a $k$-colourable weakly chordal graph $G$, what is the complexity of determining whether $\mathcal{R}_{k+1}(G)$ is connected?
\end{problem}

\end{document}